\newcommand\comment[1]{}
\newcommand{\bra}[1]{\ensuremath{\left\langle#1\right|}}
\newcommand{\ket}[1]{\ensuremath{\left|#1\right\rangle}}
\numberwithin{equation}{section}
\newtheorem{theorem}{Theorem}[section] 
\newtheorem{lemma}{Lemma}[section]     
\newtheorem{remark}{Remark}[section]
\newtheorem{example}{Example}[section]
\newcommand\ddelta\bigtriangledown
\newcommand\ld\lambda
\newcommand\Ld\Lambda
\def\cal{\mathcal}
\def\to{{\longrightarrow}}
\def\thanks#1{\protected@xdef\@thanks{\@thanks
		\protect\footnotetext{#1}}}
\begin{document} 
	\date{}
	
	\title{Quantum Circuits for the Black-Scholes equations via
		Schr\"{o}dingerisation  }
	
	\author[1,2,3,4]{Shi Jin \thanks{ shijin-m@sjtu.edu.cn}}
	\author[2]{Zihao Tang \thanks{ momo\textunderscore 0@sjtu.edu.cn}}
	\author[5]{Xu Yin \thanks{ yinx@sxu.edu.cn}}
	\author[1,2,3]{Lei Zhang \thanks{ 
			lzhang2012@sjtu.edu.cn}}
	\affil[1]{School of Mathematical Sciences, Shanghai Jiao Tong University, Shanghai 200240, China}
	\affil[2]{Institute of Natural Sciences, Shanghai Jiao Tong University, Shanghai 200240, China}
	\affil[3]{Ministry of Education, Key Laboratory in Scientific and Engineering Computing, Shanghai Jiao Tong University, Shanghai 200240, China.}
	\affil[4]{Shanghai Jiao Tong University
		Shanghai-Chongqing Institute of Artificial Intelligence, Chongqing 401329, China}
	\affil[5]{School of Mathematics and Statistics, Shanxi University, Taiyuan 030006, China}
	
	\renewcommand*{\Affilfont}{\small\it} 
	\renewcommand\Authands{ and } 

	\maketitle
	
	\begin{abstract}
		In this paper, we construct  quantum circuits for the Black-Scholes equations, a cornerstone of financial modeling, based on a quantum algorithm that overcome the cure of high dimensionality. Our approach leverages the Schrödingerisation technique, which converts linear partial and ordinary differential equations with non-unitary dynamics into a system evolved by unitary dynamics. This is achieved through a warped phase transformation that lifts the problem into a higher-dimensional space, enabling the simulation of the Black-Scholes equation on a quantum computer. We will conduct a thorough complexity analysis to highlight the quantum advantages of our approach compared to existing algorithms. The effectiveness of our quantum circuit is substantiated through extensive numerical experiments.
		
		\textbf{Keywords:}   Black-Scholes equations, Schr\"{o}dingerisation, Quantum circuits
	\end{abstract}
	
	\section{Introduction}
	The Black-Scholes equation, first introduced in 1973, is an important mathematical model in the field of derivatives pricing, which lays a theoretical foundation for derivatives pricing and promotes further research on mathematical modeling of financial markets. Numerical methods for solving the Black-Scholes equations include finite difference methods, finite element methods, Monte Carlo methods and Fourier spectral methods \cite{ace2019,an2021,boy1997,chen2012,roul2020,val2014}. While these methods offer practical advantages and address specific computational requirements, significant challenges remain in tackling high-dimensional multi-asset derivative pricing problems due to the curse of dimensionality. 
	
	In this context, quantum computing has emerged as a transformative solution for financial mathematics, potentially enabling polynomial or exponential speedups in complex derivative pricing problems like Black-Scholes while improving computational efficiency and accuracy. Rebentrost et al. \cite{reb2018} presented quantum algorithms for Monte Carlo estimation of the Black-Scholes equations, achieving a square-order speedup over classical sampling. In addition, Stamatopoulos et al. \cite{stam2020} extended these capabilities using gate-based methods to price diverse options, highlighting quantum computing's versatility for complex financial instruments. Calderer et al. \cite{cald2021} further advanced the field with a quantum circuit based on the one-dollar Black-Scholes model, proving both feasibility and practical advantages for option pricing. These developments underscore quantum computing's significant potential to revolutionize option pricing methodologies.
	
	This paper is based on  a quantum simulation algorithm for Black-Scholes equations using the Schr\"{o}dingerisation method. The technique transforms financial PDEs with non-unitary dynamics into Schr\"{o}dinger-type systems through a warped phase transformation, enabling efficient quantum solutions to derivative pricing problems. The method's versatility is demonstrated through implementations in both qubit-based \cite{jin2022,jin2022_1} and continuous-variable \cite{jin2023} quantum computing frameworks.

	Despite significant progress in quantum computing for PDEs, the explicit design of quantum circuits for solving partial differential equations remains a key challenge. Recent breakthroughs include the work of Sato et al. \cite{sato2024}, who developed a scalable Bell-basis approach for quantum circuit construction tailored to wave and heat equations. Building on Schr\"{o}dingerisation techniques, Hu et al. \cite{hu2024} further demonstrated the method's versatility by designing quantum circuits for both heat and advection equations.
	
	In this work, we present quantum circuits for the Black-Scholes equation based on the Schr\"{o}dingerisation method. Our contributions include:
	\begin{itemize}
		\item Rigorous derivation of error bounds and computational complexity estimates;
		\item Demonstration of polynomial speedup over classical methods in high-dimensional settings;
		\item Numerical verification on Qiskit, confirming both convergence properties and theoretical complexity analysis.
	\end{itemize}
	
	The paper is organized as follows: Section 2 introduces fundamental notations. Sections 3 and 4 present quantum circuit implementations for one-dimensional and $d$-dimensional Black-Scholes equations, respectively. Section 5 provides a complexity analysis, while Section 6 discusses numerical results. Conclusions are drawn in Section 7.
	
	\textbf{Notation:} We use $O$ for upper bounds and $\tilde{O}$ to ignore logarithmic terms. Pauli matrices are denoted by $X$, $Y$, and $Z$.
	
	\section{Notations for finite difference operators }
	
	Consider the one-dimensional domain $\Omega:=(0,L)$ where $L$ is  length of the domain. The domain $\Omega$ is uniformly subdivided into   $N_x=2^{n_x}$ intervals with the interval $h=L/N_x$.
	The discrete function ${\bf u}:=[u_0, u_1, \cdots, u_{Nx-1}]$  can express as a quantum state 
	$|u\rangle:=\sum_{j=0}^{2^{n_x}-1} u_j|j\rangle$.
	
	For the qubit-based system, the finite difference operators can be represented as matrix product operators (MPOs) using the following three 2 × 2 matrices
	\begin{equation*}
		\sigma_{01}:=|0\rangle \langle 1|=
		\begin{bmatrix}
			0& 1 \\
			0& 0
		\end{bmatrix},
		\quad
		\sigma_{10}:=|1\rangle \langle 0|=
		\begin{bmatrix}
			0& 0 \\
			1& 0
		\end{bmatrix},
		\quad 
		I:=|0\rangle \langle 0|+|1\rangle \langle 1|.
	\end{equation*}
	The shift operators can be defined as
	\begin{equation}\label{shift}
		\begin{aligned}
			S^{-}:&=\sum_{j=1}^{2^{n_x}-1} |j-1\rangle \langle j|
			=\sum_{j=1}^{{n_x}}  I^{\otimes(n_x-j)} \otimes \sigma_{01}\otimes \sigma_{10}^{\otimes(j-1)} \triangleq \sum_{j=1}^{{n_x}} s_j^-,\\
			S^{+}:=(S^-)^\dagger&=\sum_{j=1}^{2^{n_x}-1} |j\rangle \langle j-1|
			=\sum_{j=1}^{{n_x}}  I^{\otimes(n_x-j)} \otimes \sigma_{10}\otimes \sigma_{01}^{\otimes(j-1)} \triangleq \sum_{j=1}^{{n_x}} s_j^+.
		\end{aligned}
	\end{equation}
	
	The forward and backward finite difference operators are,
	\begin{equation}\label{shift_op}
		(D^+{\bf u})_j=\frac{u_{j+1}-u_j}{h}, \quad 
		(D^+{\bf u})_j=\frac{u_{j}-u_{j-1}}{h}, \quad j=0,1,\cdots,N_x-1.
	\end{equation}
	The difference operator with Dirichlet boundary conditions can be expressed as follows,
	\begin{equation*}
		D^+_D=\frac{S^--I^{\otimes n_x}}{h}, \quad D^-_D=\frac{I^{\otimes n_x}-S^+}{h},\quad
		D^\pm_D=\frac{S^--S^+}{2h},\quad D^\triangle_D=\frac{S^-+S^+-2I^{\otimes n_x}}{h^2}.
	\end{equation*}

	\section{The one-dimensional Black-Scholes equation}
	
	In this section, we implement quantum circuits for the one-dimensional Black-Scholes model of European call options using the Schr\"{o}dingerisation quantum simulation framework.
	
	Consider a scalar field $w$ governed by the one-dimensional Black-Scholes equation,

	\begin{equation}\label{bs}
		\frac{\partial w}{\partial t}+rs \frac{\partial w}{\partial s}+	\frac{\sigma^2 s^2}{2}\frac{\partial^2 w}{\partial s^2}=rw.
	\end{equation}
	This equation is a variable coefficient parabolic convection-diffusion equation.
	Here, $s\in[s_{\min}, s_{\max}]$  represents the stock price and  $w$ is the option price. $r$ and $\sigma$ are the risk-free interest rate and volatility, respectively, which are typically positive constants.

	Through  variable substitution $ s=e^{x}, -\infty<x<\infty$ and $\tau=T-t$, equation \eqref{bs} is converted  into a positive constant-coefficient partial differential equation
	\begin{equation}\label{bs_model}
		\left\{
		\begin{aligned}
			&		\frac{\partial w}{\partial \tau }=\left(r-\frac{\sigma^2}{2}\right)\frac{\partial w}{\partial x}+
			\frac{\sigma^2}{2}\frac{\partial^2 w}{\partial x^2}-rw,\\
			&	w(x,0)=w_0,\\
		\end{aligned}
		\right.
	\end{equation}
	where $w_0=w(0,x)=\max(e^x-K,0)$.
	The European option contract is mainly considered, and the exercise price is assumed to be $K$.
	The Dirichlet boundary conditions are represented as
	\[w(x,\tau)  = 0,\quad x\rightarrow -\infty, \quad w(x,\tau)  = e^x -K e^{-r\tau},\quad x\rightarrow +\infty.\]
	
	Alternatively, we may impose the following mixed boundary conditions,
	\begin{equation*}
		w(x,\tau) = 0 \quad \text{as} \quad x \to -\infty, \quad \frac{\partial w(x,\tau)}{\partial x} = e^x \quad \text{as} \quad x \to \infty,
	\end{equation*}
	which remain consistent with the Dirichlet boundary condition formulation of model \eqref{bs_model}.
	
	To facilitate the application of the Schr\"odingerisation method, we introduce the variable substitution $u = w - e^{x}$,
	which transforms both the equation and the Dirichlet boundary conditions to:
	\begin{equation}\label{bs_equation}
		\left\{
		\begin{aligned}
			&	\frac{\partial u}{\partial \tau }=\left(r-\frac{\sigma^2}{2}\right)\frac{\partial u}{\partial x}+\frac{\sigma^2}{2}\frac{\partial^2 u}{\partial x^2}-ru,\\
			&	u(x,0)=w_0-e^x,\\
			&	u(x,\tau)  =  -e^x,\quad x\rightarrow -\infty,\\
			&	u(x,\tau)  =  -K e^{-r\tau},\quad x\rightarrow +\infty.
		\end{aligned}
		\right.
	\end{equation}
	
	Assuming that $x_{\min}$ is a sufficiently small negative number and $x_{\max}$ is a sufficiently large positive number, let the interval $[L, R]$ be the truncation of the logarithmic space of asset prices which can be   uniformly subdivided into   $N_x=2^{n_x}+1$ intervals with the interval $h=(R -L)/N_x$.
	The equation \eqref{bs_equation} can be solved using the central  difference method in classical simulation as follows:
	\begin{equation}\label{finite}
		\frac{d{\bf u}}{d\tau }=\left(r-\frac{\sigma^2}{2}\right) D_D^{\pm} {\bf u} +\frac{\sigma^2}{2} D_D^{\Delta}{\bf u}-rI {\bf u} +{\bf b}\triangleq A{\bf u}+{\bf b}, 
		\quad {\bf u}(0)={\bf u}_0,
	\end{equation}
	where ${\bf u}=[u(x_1,\tau),u(x_2,\tau),\cdots,u(x_{N_x-1},\tau)]^T$, and $u(x_0,\tau)=-e^{x_0}$, $u(x_{N_x},\tau)=-K e^{-r\tau}$. 
	The initial value is given by
	\[{\bf u}_0=[u(x_1,0),u(x_2,0),\cdots,u(x_{N_x-1},0)]^T\]
	and    
	\begin{equation}\label{value:b}
		{\bf b}=\left[\left(-\frac{r-\frac{\sigma^2}{2}}{2h}+\frac{\sigma^2}{2h^2}\right)u(x_0,\tau),0,\cdots,0, \left(\frac{r-\frac{\sigma^2}{2}}{2h}+\frac{\sigma^2}{2h^2}\right) u(x_{N_x},\tau)\right]^T.
	\end{equation}
	
	\begin{remark}
		For mixed boundary conditions,	the internal nodes are treated in the same way as the  Dirichlet condition, only virtual point $x_{N_x+1}=x_{N_x}+h$ need to be introduced at the right boundary of the interval. The first derivative is discretized by central difference method
		\begin{equation*}
			\frac{u_{N_x+1}-u_{N_x-1}}{2h}=0.
		\end{equation*}
		It is reasonable to assume  the above grid discretization at $x_{N_x}$, then
		\begin{equation*}
			\begin{aligned}
				\frac{d u_{N_x}}{dt}&=\left (-\frac{\sigma^2}{h^2}-r\right)u_{N_x} +\left(\frac{r-\frac{\sigma^2}{2}}{2h}+\frac{\sigma^2}{2h^2}\right)u_{N_x+1}+ \left(-\frac{r-\frac{\sigma^2}{2}}{2h}+\frac{\sigma^2}{2h^2}\right)u_{N_x-1}.
			\end{aligned}
		\end{equation*}
		Then with \eqref{finite}, the ODE system with mixed boundary conditions is obtained.
		Its matrix dimension is one order higher than the discrete matrix $A$ of the Dirichlet boundary condition model.
		
	\end{remark}
	
	\begin{remark}
		In Equation \eqref{value:b}, since $x_{\text{0}}$ is a sufficiently small negative number, we can take the boundary condition as $u(x_{0},\tau) = 0$, i.e., ${\bf b}=\left[0,0,\cdots,0, \left(\frac{r-\frac{\sigma^2}{2}}{2h}+\frac{\sigma^2}{2h^2}\right) u(x_{N_x},\tau)\right]^T$.
	\end{remark}

	\subsection{Schr\"{o}dingerisation}
	Introduce the auxiliary variable ${\bf R}(t)$ and let ${\bf R}(0)=(1,\cdots,1)^{T}, B = \text{diag}(0,\cdots, 0, -K(\frac{1}{2}\frac{\sigma^{2}}{h^{2}} + \frac{r-\frac{\sigma^{2}}{2}}{2h}))$. By dilating the system ${\bf u} \rightarrow \bar {\bf u}={\bf u}\otimes |0\rangle +{ \bf R}(t)\otimes|1\rangle$, one can derive
	\begin{equation}\label{dilate}
		\frac{d \bar {\bf u}}{d\tau}= C\bar {\bf u},\quad 
		C=\begin{bmatrix}
			A& B \\
			{\bf 0}& -rI
		\end{bmatrix}.
	\end{equation}
		In this case, $ C\neq  C^{\dagger}$, we decompose the matrix as
		\begin{equation}\label{decompose}
			\begin{aligned}
				A&=A_1+iA_2,\quad   C=C_1+iC_2,\\
				C_1&=( C+ C^\dagger)/{2}=C_1^\dagger, \quad 	C_2=( C- C^\dagger)/{2i}=C_2^\dagger,\\
				C_1&=\begin{bmatrix}
					A_1&  \frac{1}{2}B\\
					\frac{1}{2}B& rI
				\end{bmatrix}=|0\rangle \langle 0| \otimes A_1+ |1\rangle \langle 1|\otimes rI+   X\otimes\frac{1}{2} B,\\
				C_2&=\begin{bmatrix}
					A_2&   \frac{1}{2i}B\\
					- \frac{1}{2i}B& {\bf 0}
				\end{bmatrix}=|0\rangle \langle 0|\otimes A_2+ Y\otimes \frac{1}{2}B.
			\end{aligned}
		\end{equation}
		
		Applying the Schr\"odingerisation method with the warped phase transformation $\mathbf{v}(\tau, p) = e^{-p} \bar{\mathbf{u}}(\tau)$, $p > 0$, we obtain
		\begin{equation}\label{warp}
			\partial_\tau {\bf v}(\tau,p)=C_1 {\bf v}(\tau,p)+i C_2 {\bf v}(\tau,p)=-C_1 \partial_p {\bf v}(\tau,p)+i C_2 {\bf v}(\tau,p), \quad p>0,
		\end{equation}
		and this equation is extended to $p<0$ with initial data ${\bf v}(0,p)=e^{-|p|}\bar {\bf u}(0)$.
		Fourier transform with respect to $p$ leads to
		\begin{equation*}
			\partial_\tau  \hat{{\bf v}}(\tau,\eta)=i(\eta C_1+C_2) \hat {{\bf v}}(\tau,\eta)\triangleq i H_{BS}\hat {{\bf v}}(\tau ,\eta),
		\end{equation*}
		with $H_{BS}$ being Hermitian. 
		
		\begin{remark}\label{remark}
			For cash or worthless options with Dirichlet or mixed boundary conditions where the option price is equal to 0 at the boundaries, satisfying ${\bf b} = 0$, 
			the Schr\"{o}dingerisation method can be directly applied to equation  \eqref{finite} without dilation. By  directly converting $A$ into a Hermitian matrix, the Schr\"{o}dingerisation method can be used.
		\end{remark}
		
		\subsection{Discretization}
		
		Discretize the domain in $p$ as $-\pi L_p=p_0<\cdots<p_{N_p} =\pi L_p$ with a mesh size of $\Delta p=2\pi L_p/N_p$. Define $\eta_k=\left(k-\frac{N_p}{2}\right)/L_p$ for $k=0,1,\cdots, N_p-1$.
		The variables ${\bf v}$ and $\hat{\bf v}$ are discretized as ${\bf v}(t):=[v(x_j,p_k,t)]_{j,k}$ and $\hat {\bf v}(t):=[\hat v(x_j,\eta_k,t)]_{j,k}$ with $1\leq j\leq N_x-1$, $0\leq k\leq N_p-1$.
		The initial conditions can be represented as
		\begin{equation}\label{initial}
			\begin{aligned}
				{\bf v}(0):&= \bar{\bf u}(0) \otimes {\bf p}=[\bar u(x_1,0),\bar u(x_2,0),\cdots,\bar u(x_{2N_x-2},0)] \otimes [e^{-|p_0|},e^{-|p_1|},\cdots,e^{-|p_{Np-1}|}],\\
				\hat {\bf v}(0):&=\bar {\bf u}(0) \otimes \eta=[\bar u(x_1,0),\bar u(x_2,0),\cdots,\bar u(x_{2N_x-2},0)] \otimes
				\left[ \frac{1}{\eta_0^2+1}, \frac{1}{\eta_1^2+1},\cdots, \frac{1}{\eta_{N_p-1}^2+1}\right].
			\end{aligned}
		\end{equation}
		The Hamiltonian $H_{BS}$ is discretized as
		\begin{equation}\label{hbs}
			\begin{aligned}
				H_{BS}&=C_1\otimes D_\eta+C_2\otimes I^{\otimes n_p}\\
				&=\left(|0\rangle \langle 0| \otimes\left(\frac{\sigma^{2}}{2}D^{\Delta}_{D} - rI^{\otimes n_{x}}\right)
				-|1\rangle \langle 1|\otimes rI^{\otimes n_{x}} + X\otimes\frac{1}{2}B \right)\otimes D_\eta \\
				&+ \left( |0\rangle \langle 0|\otimes \frac{1}{i} \left(r-\frac{\sigma^2}{2}\right) D_D^{\pm}+Y\otimes\frac{1}{2}B\right)\otimes I^{\otimes n_p}\\
				&=\sum_{k=0}^{N_p-1}\left(k-\frac{N_p}{2}\right)\left(|0\rangle \langle 0|\otimes(\frac{\sigma^2}{2}{ H}_1 - \frac{r}{L_{p}}I^{\otimes n_{x}}) - |1\rangle \langle 1|\otimes\frac{r}{L_{p}}I^{\otimes n_{x}}\right)\otimes|k\rangle \langle k|\\
				&	+\sum_{k=0}^{N_p-1}\left(k-\frac{N_p}{2}\right)/L_p \left(	X\otimes \frac{1}{2}B\right)\otimes|k\rangle \langle k|\\
				&+ \left( |0\rangle \langle 0|\otimes \left(r-\frac{\sigma^2}{2}\right) { H}_2  +Y\otimes \frac{1}{2}B \right) \otimes I^{\otimes n_p},
			\end{aligned}
		\end{equation}
		where
		\begin{equation}\label{h123}
			\begin{aligned}
				{ H}_1&=\gamma_1\left[\sum_{j=1}^{n_x}(s_j^-+s_j^+)-2I^{\otimes n_x}\right],\quad \gamma_1=\frac{1}{h^2L_p},\\
				{ H}_2&=-i\gamma_2\left[\sum_{j=1}^{n_x}(s_j^- -s_j^+)\right],\quad \gamma_2=\frac{1}{2h}.\\
			\end{aligned}
		\end{equation}
		
		\subsection{{Quantum circuit}}

		By employing the first-order Lie-Trotter-Suzuki decomposition, we can approximate the time evolution operator $U_{BS}(\tau):=\exp(i \tau{ H}_{BS} )$,
		
		\begin{dmath}\label{ubs}
			U_{BS}(\tau) \approx \exp\left(i\tau \left( |0\rangle \langle 0|\otimes \left(r-\frac{\sigma^2}{2}\right) { H}_2  +Y\otimes \frac{1}{2}B  \right) \otimes I^{\otimes n_p}\right) \\
			\exp\left(i\tau \sum_{k=0}^{N_p-1}\left(k-\frac{N_p}{2}\right)\left(|0\rangle \langle 0|\otimes(\frac{\sigma^2}{2}{ H}_1- \frac{r}{L_{p}}I^{\otimes n_{x}}) - |1\rangle \langle 1|\otimes\frac{r}{L_{p}}I^{\otimes n_{x}} \right)\otimes|k\rangle \langle k| \right)\\
			\exp\left(i\tau \sum_{k=0}^{N_p-1}\left(k-\frac{N_p}{2}\right)/L_p\left(X\otimes \frac{1}{2}B\right)\otimes|k\rangle \langle k| \right)\\
			= \left(\ket{0}\bra{0}\otimes U_2\left( \left(r-\frac{\sigma^2}{2}\right) \tau \right) + \ket{1}\bra{1}\otimes I^{\otimes n_{x}}\right) U_2^{(1)}(\tau)
			\otimes I^{\otimes n_p}\\
			\cdot
			\sum_{k=0}^{N_p-1} \left(\left(\ket{0}\bra{0}\otimes \text{Ph}(-\tau\frac{r}{L_{p}})U_1\left(\frac{\sigma^2}{2}\tau\right) + \ket{1}\bra{1}\otimes \text{Ph}(-\tau\frac{r}{L_{p}})I^{\otimes n_{x}}\right) {U_1^{(1)}(\tau)}\right)^{k-N_p/2}\otimes|k\rangle \langle k|\\
			\triangleq  U_{*}(\tau),
		\end{dmath}
		
		where 
		\begin{equation}\label{eq:1}
			\begin{aligned}
				U_j(\tau)&:= \exp\left(i\tau H_j\right) ,\quad j=1,2,   \\
				U_1^{(1)}(\tau)&:=\exp\left(i\tau/L_p \left(X\otimes \frac{1}{2}B\right) \right),\\
				U_2^{(1)}(\tau)&:=\exp\left(i\tau Y\otimes \frac{1}{2}B\right),\\
				{\rm Ph}(\theta)&:=e^{i\theta} I^{\otimes n_x},
			\end{aligned}
		\end{equation}
		and we have used the following properties:
		\begin{equation*}
			\exp\left(\sum_k A_k \otimes |k\rangle \langle k|\right)=\sum_k \exp(A_k) \otimes |k\rangle \langle k|.
		\end{equation*}

		Noticing that $B$ is an $n\times n$ diagonal matrix with only one nonzero element at the  $(n,n)$ position, and  $U_1^{(1)}$ and $U_2^{(1)}$ can be simulated by multi-controlled $R_{x}, \ R_{y}$ gates.	Again using the first-order Lie-Trotter-Suzuki decompositon, $U_1(\tau)$ is approximated as
		
		\begin{equation}\label{hamiltion1}
			\begin{aligned}
				U_1(\tau)&=\exp\left(i\tau \gamma_1\left[\sum_{j=1}^{n_x}(s_j^-+s_j^+)-2I^{\otimes n_x}\right]\right)
				\\
				&= \exp(-2i\gamma_1 \tau) \exp\left(i\tau \gamma_1 \left[\sum_{j=1}^{n_x}\left(s_j^-+s_j^+\right)\right]\right) \\
				&\triangleq  {\rm Ph}(-2\gamma_1 \tau) U_1^{(2)}(\tau).
			\end{aligned}
		\end{equation}

		$U_2$ is approximated as
		
		\begin{equation}\label{hamiltion3}
			\begin{aligned}
				U_2(\tau)&=\exp\left(i\tau \gamma_2 \left[\sum_{j=1}^{n_x}-i(s_j^--s_j^+)\right]\right)
				\triangleq  U_2^{(2)}(\tau).
			\end{aligned}
		\end{equation}
		According to \cite{hu2024}, one can utilize an explicit quantum circuit which solves the general form $\gamma \sum_{j=1}^{n_x} (e^{i\lambda} s_j^- +e^{-i\lambda} s_j^+)$ with $\lambda, \gamma \in {\mathbb{R}}$,
		\begin{equation*}
			e^{i\lambda} s_j^- +e^{-i\lambda} s_j^+
			=I^{\otimes (n_x-j)} \otimes \left(e^{i\lambda}|0\rangle| 1\rangle^{\otimes{j-1}} \langle1| \langle 0|
			^{\otimes(j-1)} +e^{-i\lambda}|1\rangle| 0\rangle^{\otimes{j-1}} \langle 0| \langle 1|
			^{\otimes(j-1)}\right).
		\end{equation*}
		Define the unitary matrix Bell basis $B_j (\lambda)$ such that
		\begin{equation*}
			\begin{aligned}
				&	B_j(\lambda)| 0\rangle | 1\rangle^{\otimes(n-1)}=\frac{|0\rangle | 1\rangle
					^{\otimes(j-1)}+e^{-i\lambda}| 1\rangle | 0\rangle
					^{\otimes(j-1)}}{\sqrt{2}},\\
				&	B_j(\lambda)| 1\rangle | 1\rangle^{\otimes(n-1)}=\frac{|0\rangle | 1\rangle
					^{\otimes(j-1)}+e^{-i\lambda}| 1\rangle | 0\rangle
					^{\otimes(j-1)}}{\sqrt{2}}.
			\end{aligned}
		\end{equation*}
		Furthermore, $B_j(\lambda)$ is constructed by
		\begin{equation*}
			B_j(\lambda):=\left( \displaystyle\prod_{m=1}^{j-1} {\rm CNOT}_m^j \right) P_j(-\lambda) H_j,
		\end{equation*}
		where $H_j$ is the  Hadamard gate acting on the
		$j$-th qubit. $P_j(\lambda)$ is the Phase gate acting on the
		$j$-th qubit as
		\begin{equation*}
			P_j(\lambda)=\begin{bmatrix}
				1& 0 \\
				{ 0}& e^{i\lambda}
			\end{bmatrix}.
		\end{equation*}
		The time evolution operator is formulated as
		\begin{equation*}
			\begin{aligned}
				\exp(i\gamma \tau 	(e^{i\lambda} s_j^- +e^{-i\lambda} s_j^+) )&=
				I^{\otimes(n_x-j)} \otimes B_j(\lambda) {\rm CRZ}_j^{1,\cdots,j-1} (-2\gamma \tau) B_j(\lambda)^\dagger\\
				&\triangleq 	I^{\otimes(n_x-j)} \otimes W_{j} (\gamma \tau,\lambda),
			\end{aligned}
		\end{equation*}
		where ${\rm CRZ}_{j}^{1,\cdots, j-1}(\theta):=\exp(-i\theta Z_j/2)\otimes |1\rangle \langle 1|^{\otimes(j-1)}+ I\otimes(I^{\otimes(j-1)}-|1\rangle \langle 1|^{\otimes(j-1)})$
		is the multicontroled
		$\rm RZ$ gate acting on the $j$-th qubit controlled by $1, \cdots, j-1$-th qubits. 
		
		Applying the first-order Lie-Trotter-Suzuki decomposition, we have 
		\begin{equation*}
			\exp\left(i\gamma \tau \sum_{j=1}^{n_x}	(e^{i\lambda} s_j^- +e^{-i\lambda} s_j^+) \right)\approx
			\displaystyle\prod_{j=1}^{n_x} \exp(i\gamma \tau 	(e^{i\lambda} s_j^- +e^{-i\lambda} s_j^+) )
			=\displaystyle\prod_{j=1}^{n_x} I^{\otimes(n_x-j)} \otimes W_{j} (\gamma \tau,\lambda).
		\end{equation*}
		
		From this point, we set $\lambda=0$, $\gamma=\gamma_1$ and $\lambda=-\frac{\pi}{2}$,  $\gamma=\gamma_2$, respectively, to simulate $U_1^{(2)}$ and $U_2^{(2)}$ in \eqref{hamiltion1} and \eqref{hamiltion3}. Then
		$U_1$ and $U_2$ can be constructed explicitly as
		\begin{equation}\label{appro_V}
			\begin{aligned}
				U_1(\tau)&\approx{\rm Ph}(-2\gamma_1 \tau)  \displaystyle\prod_{j=1}^{n_x} I^{\otimes(n_x-j)} \otimes W_j(\gamma_1\tau,0)  
				\triangleq V_1(\tau),\\
				U_2(\tau)&\approx \displaystyle\prod_{j=1}^{n_x} I^{\otimes(n_x-j)} \otimes W_j(\gamma_2\tau,-\frac{\pi}{2})\triangleq V_2(\tau).
			\end{aligned}
		\end{equation}

		Ultimately, the time evolution operator ${U_{BS}(\tau )}$ is approximated by
		\begin{equation}\label{vbs}
			\begin{aligned}
				V_{BS}(\tau)&:= \tilde{V}_2(\tau)
				\otimes I^{\otimes n_p} 
				\sum_{k=0}^{N_p-1} \tilde{V}_1(\tau)^{k-N_p/2}
				\otimes|k\rangle \langle k|,
			\end{aligned}
		\end{equation}
		where
		\begin{align*}
			&\tilde{V}_1(\tau)= \left(\ket{0}\bra{0}\otimes \text{Ph}(-\tau\frac{r}{L_{p}})V_1\left(\frac{\sigma^2}{2}\tau\right) + \ket{1}\bra{1}\otimes \text{Ph}(-\tau\frac{r}{L_{p}})I^{\otimes n_{x}}\right) {U_1^{(1)}(\tau)},\\
			&\tilde{V}_2(\tau)=\left(\ket{0}\bra{0}\otimes V_2\left( \left(r-\frac{\sigma^2}{2}\right) \tau \right) + \ket{1}\bra{1}\otimes I^{\otimes n_{x}}\right) U_2^{(1)}(\tau).
		\end{align*}

		In order to minimize the gate count during the implementation, we utilize the binary representation
		of integers $k=(k_{n_p-1 \cdots k_0.})=\sum_{m=0}^{n_p-1} k_m 2^m$ to obtain
		
		\begin{equation*}
			\begin{aligned}
				{{\tilde{V}}_{BS}(\tau)}
				=
				\tilde{V}_2(\tau)
				\tilde{V}_1(\tau)^{-N_p/2}
				\otimes I^{\otimes n_p}
				{{\prod}'}_{m=0}^{n_p-1} \left(\tilde{V}_1(\tau)^{2^m}
				\otimes|1\rangle \langle 1|
				+I^{\otimes n_x}\otimes|0\rangle \langle 0|\right),
			\end{aligned}
		\end{equation*}
		where the primed product ${{\prod}'}$
		denotes the regular matrices product for the first register (consisting
		of $n_x+1$ qubits) and the tensor product for the second register (consisting of $n_p$ qubits).
		Assuming that ${{\tilde{V}}_{BS}(\tau)}$can be implemented with a cost independent of $k$, this approach is exponentially more efficient than directly implementing ${{{V}}_{BS}(\tau)}$.
		The detailed quantum circuits for the approximating unitary matrices  are depicted in Figure \ref{W_source}-\ref{VBS}.

		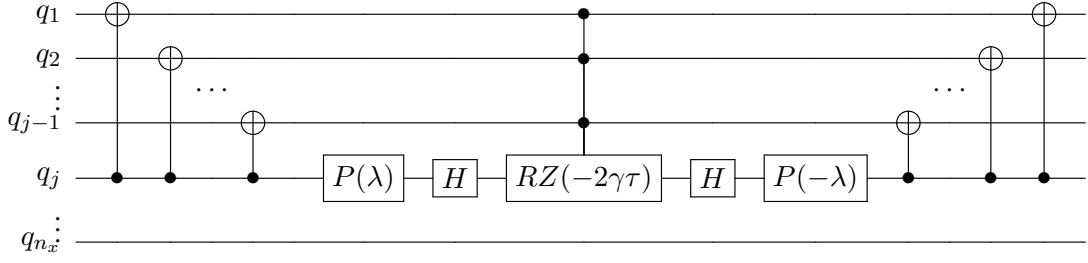
\begin{figure}[h!] 
			\centerline{
				\Qcircuit @C=1em @R=.7em
				{   \lstick{q_1}& \targ & \qw & \qw & \qw &\qw&\qw&\qw & \ctrl{4} &\qw&\qw &\qw &\qw &\qw &\targ &\qw \\
					\lstick{q_2}& \qw & \targ & \qw & \qw &\qw&\qw&\qw & \ctrl{3} &\qw &\qw &\qw &\qw &\targ &\qw&\qw \\
					\lstick{\vdots}&  &   & \cdots &  && &  & & & & &\cdots&\\
					\lstick{q_{j-1}}& \qw & \qw&\qw  & \targ & \qw &\qw&\qw & \ctrl{1} &\qw &\qw &\targ &\qw &\qw&\qw&\qw\\
					\lstick{q_j}& \ctrl{-4} & \ctrl{-3} &\qw& \ctrl{-1} & \qw & \gate{P(\lambda)} &\gate{H}  &\gate{RZ(-2\gamma \tau)} &\gate{H} & \gate{P(-\lambda)}  & \ctrl{-1}&\qw & \ctrl{-3} & \ctrl{-4} &\qw \\
					\lstick{\vdots}\\
					\lstick{q_{n_x}}& \qw&\qw & \qw & \qw & \qw &\qw&\qw &\qw &\qw &\qw &\qw &\qw &\qw&\qw &\qw}
			}
			\caption{Quantum circuit for $W_{j}(\gamma \tau,\lambda)$}
			\label{W_source}
		\end{figure}

		\begin{figure}[h!] 
			\centerline{
				\Qcircuit @C=1em @R=.7em
				{    \lstick{q_1}&\gate{X} &\ctrl{1}     &\qw  &\ctrl{1}  &\ctrl{1}  &\ctrl{1}&\gate{X} &\ctrl{1}&\gate{R_{x}}&\qw\\
					\lstick{q_2}& \qw &\gate{ W_1(\gamma_1 \tau,0)} & \cdots& \multigate{2}{ W_{n_x-1}(\gamma_1 \tau,0)} & \multigate{3}{ W_{n_x}(\gamma_1 \tau,0)}  &\gate{{\rm Ph}(\gamma_1 \tau)}&\qw&\gate{{\rm Ph}(\tau r/L_p )}&\ctrl{0} &\qw\\
					\lstick{\vdots}&& &\cdots&  \ghost{ W_{n_x-1}(\gamma_1 \tau,0)} &\ghost{ W_{n_x}(\gamma_1 \tau,0)}&&&&&   & \\
					\lstick{q_{n_x}}& \qw& \qw& \qw &  \ghost{ W_{n_x-1}(\gamma_1 \tau,0)}  &\ghost{ W_{n_x}(\gamma_1 \tau,0)} & \qw &\qw&\qw&\ctrl{0}&\qw\\
					\lstick{q_{n_x+1}}& \qw& \qw& \qw &  \qw  &\ghost{ W_{n_x}(\gamma_1 \tau,0)} & \qw 		&\qw&	\qw&\ctrl{-4}&\qw			
					\gategroup{2}{3}{5}{6}{.7em}{--}  
				}
			}
			\caption{Quantum circuit for 	$\tilde{V}_1(\tau)$}
			\label{V1_source}
		\end{figure}

		\begin{figure}[h!] 
			\centerline{
				\Qcircuit @C=1em @R=.7em
				{ \lstick{q_1}&\gate{X}  &\ctrl{1}  & \qw &\ctrl{1}   &\ctrl{1} &\gate{X}&\qw &\gate{R_{y}}&\qw &\qw\\
					\lstick{q_2}& \qw &\gate{W_1(\gamma_2 \tau,-\frac{\pi}{2})} & \qw& \multigate{2}{ W_{n_x-1}(\gamma_2 \tau,-\frac{\pi}{2})} & \multigate{3}{ W_{n_x}(\gamma_2 \tau,-\frac{\pi}{2})} &\qw  &\qw&\ctrl{0}&\qw&\qw\\
					\lstick{\vdots}&& & \cdots & \ghost{W_{n_x-1}(\gamma_2 \tau,-\frac{\pi}{2})} &  \ghost{W_{n_x}(\gamma_2 \tau,-\frac{\pi}{2})} & && && \\
					\lstick{q_{n_x}}& \qw& \qw& \qw &  \ghost{ W_{n_x-1}(\gamma_2 \tau,-\frac{\pi}{2})} &  \ghost{W_{n_x}(\gamma_2 \tau,-\frac{\pi}{2})} & \qw &\qw &\ctrl{0}&\qw&\qw\\
					\lstick{q_{n_x+1}}& \qw& \qw &  \qw&\qw &  \ghost{W_{n_x}(\gamma_2 \tau,-\frac{\pi}{2})} & \qw & \qw&\ctrl{-4}  & \qw&\qw 
					\gategroup{2}{3}{5}{6}{.7em}{--} 
				}
			}
			\caption{Quantum circuit for 	$\tilde{V}_2(\tau)$}
			\label{V3_source}
		\end{figure}

		\begin{figure}[h!] 
			\centerline{
				\Qcircuit @C=.5em @R=.7em
				{\lstick{q} &\qw&\qw&{/^{n_x +1} }\qw&\qw &\qw &\qw &  \gate{\tilde{V}_1(\tau)} & \gate{\tilde{V}_1^2(\tau)} &\qw& \cdots & & \gate{\tilde{V}_1^{2^{n_p-1}}(\tau)}&\gate{\tilde{V}_1^{-2^{n_p-1}}(\tau)} &   \gate{\tilde{V}_2(\tau)} &\qw\\
					\lstick{	k_0}&\qw&\qw&\qw &\qw&\qw &\qw &  \ctrl{-1}&\qw &\qw& \cdots &   & \qw & \qw  &\qw&\qw \\
					\lstick{k_1}&\qw&\qw&\qw &\qw &\qw&\qw & \qw& \ctrl{-2}&\qw& \cdots & &\qw& \qw&\qw& \qw\\
					\lstick{	\vdots}\\
					\lstick{k_{n_p-1}}&\qw&\qw&\qw &\qw &\qw&\qw &\qw&\qw &\qw& \cdots &\qw&\ctrl{-4} & \qw  & \qw &\qw  
				}
			}
			\caption{Quantum circuit for ${\tilde{V} }_{BS}$}
			\label{VBS}
		\end{figure}
		
		For time $T=N \tau$, one can obtain $|\hat {\bf v}(T)\rangle:= V_{BS}^N |\hat {\bf v}(0)\rangle $, which means
		the operator $V_{BS}(\tau)$ is applied $N$ times.
		We can perform an inverse quantum Fourier transform on it to retrieve $|{\bf v}(T)\rangle$. Then a measurement
		(the projection $P := I^{\otimes n_x+1}\otimes|p_k^{\diamond}\rangle\langle p_k^{\diamond}|$) is taken to select only the $|p_k^{\diamond}\rangle$ part of	the state $|{\bf u}(T)\rangle$. 
		The full circuit to implement the Schr\"{o}dingerisation method is shown in Figure \ref{schro}.
		
		\begin{remark}
			
			According to 
			Theorem 5.1 in \cite{ma2024},  $ p_k^{\diamond} > \max(\lambda_n(C_1)T, 0)$, where $\lambda_n(C_1) $ is  the max eigenvalues of matrix $C_1$,  we can recovery of the original variables ${\bf u}$. While  $- rI$ is a negative number multiplied by the identity matrix  and $B$ in \eqref{dilate} 
			is a matrix with only one non-zero entry, we just focus on the eigenvalues of $A_1$.
			
		\end{remark}

		\begin{figure}[h!] 
			\centerline{
				\Qcircuit @C=1em @R=2.4em {
					\lstick{} &  & &&&& &&	\lstick{ {\bf u}\otimes \ket{0}+{\bf b}\otimes \ket{1}} &\qw
					&{/^{n_x +1} }&\qw&\qw& \multigate{1}{V_{BS}^N}&\qw&\qw& {{\bf v}(T)}\\
					\lstick{} && && \lstick{\bf p}&\qw&\qw&\qw&\qw&\qw{/^{n_p } }&\qw&\qw&\gate{{\cal {QFT}}} &\ghost{V_{BS}}&\gate{\cal{IQFT}}& \qw&& {P>0}
					\inputgroupv{1}{2}{.8em}{.8em}{{{\bf v}(0)}}
				}
			}
			\caption{Quantum circuit for the Schr\"{o}dingerisation method, where the measurement requires
				only projection onto $P>0$ and $\cal{QFT (IQFT )}$ denotes the (inverse) quantum Fourier
				transform. }
			\label{schro}
		\end{figure}
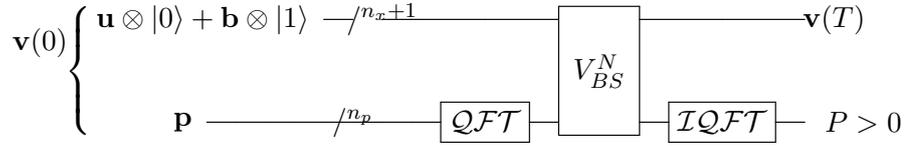
		
		\section{The d-dimension Black-Scholes equation}
		
		Denote $s_1, s_2, \cdots , s_d$ as the prices of $d$ risky assets. $W$ is the option price based on the underlying assets $s_1, s_2, \cdots , s_d$, and its pricing law satisfies the $d$-dimensional Black-Scholes equation:
		\begin{equation*}\label{bs_d}
			\frac{\partial U}{\partial t}+\frac{1}{2} \sum_{m,n=1}^{d} \sigma_m \sigma_n \rho_{mn} s_m s_n \frac{\partial^2 U}{\partial{s_m}\partial {s_n}}+\sum_{m=1}^d r s_m\frac{\partial U}{\partial s_{m}} -rU=0,
		\end{equation*}
		where $t\in [0,T]$, and ${\bf s}=[s_1,\cdots,s_d]^T\in (0,s_{\max}^1)\times\cdots \times (0,s_{\max}^d)$, $\rho_{m,n}\in[-1,1], (m,n=1,\cdots,d)$.

		Using the similar technique applied for the 1d case, set $s_i=e^{x_i}, \tau=T-t$, one can derive:
		\begin{equation}\label{bsd}
			\frac{\partial U}{\partial t}=\frac{1}{2} \sum_{m,n=1}^{d} \sigma_m \sigma_n \rho_{mn} \frac{\partial^2 U}{\partial{x_m}\partial {x_n}}+\sum_{m=1}^d \left(r -\frac{\sigma_m^2}{2}\right)\frac{\partial U}{\partial x_{m}} -rU.
		\end{equation}
		For high dimensional Black-Scholes equation,
		mixed boundary conditions are often used in application. 
		Assume the following boundary condition for call option
		\begin{equation*}
			U(x_1, \cdots, x_{j-1},L, x_{j+1},\cdots, x_d)=0,\quad 
			\frac{\partial U}{\partial x_j}(x_1, \cdots, x_{j-1},R, x_{j+1},\cdots, x_d)=0.
		\end{equation*}							 
		Then	the equation \eqref{bsd} can be solved using the central  difference method as follows:
		\begin{equation}\label{BS-d}
			\frac{d{ U}}{d\tau }=\sum_{m=1}^{d}\left(r-\frac{\sigma^2_{m}}{2}\right) (D_D^{\pm})_{m} U +\sum_{m,n=1}^d \frac{\sigma_m \sigma_n \rho_{mn}}{2} (D_D^{\Delta})_{mn} U-r U
			\triangleq {\bf A}{U}
			,
		\end{equation}
		where ${U}=[U_{j_1,\cdots,j_d}]_{1\leq j_i\leq N_{x}-1}, (D_D^{\pm})_{m}=I^{\otimes (m-1)n_x }\otimes 
		D_D^{\pm} \otimes I^{\otimes(d-m-1)n_x}.$

		\begin{remark}\label{remark1}
			In \eqref{BS-d},
			the operator $ D_D^{\Delta}$ is expressed by  the central  difference formula
			\begin{equation*}
				(D_D^{\Delta})_{mn}u=	\left\{
				\begin{aligned}
					&	\frac{1}{h^2}(u_{m+1,n}-2u_{m,n}+u_{m-1,n}), \quad m=n,\\
					&	\frac{1}{4h^2}(u_{m+1,n+1}-u_{m+1,n-1}-u_{m-1,n+1}-u_{m-1,n-1}), \quad m\neq n.
				\end{aligned}
				\right.
			\end{equation*}
			For the case $m=n$, one can obtain $(D_D^{\Delta})_{mn}=I^{\otimes (m-1)n_x}\otimes D_D^{\Delta}\otimes I^{\otimes (d-m-1)n_x}$, where $D_D^{\Delta}$ is defined in \eqref{shift_op}. 
			while 	for $m\neq n$, $(D_D^{\Delta})_{mn}=I^{\otimes (m-1)n_x}\otimes \frac{1}{2h}S^+ \otimes I^{\otimes (d-m-n)n_x} \otimes D_D^{\pm} \otimes I^{\otimes (d-n-1)n_x}+
			I^{\otimes (m-1)n_x}\otimes \frac{1}{2h}S^- \otimes I^{\otimes (d-m-n)n_x} \otimes D_D^{\pm} \otimes I^{\otimes (d-n-1)n_x}$, where $S^+$ and $S^-$ are shift operators defined in \eqref{shift}.
		\end{remark}
		
		\subsection{Schr\"{o}dingerisation}

		Most banks use a simplified model for ordinary basket options that ignores weak correlations.
		Thus, in this section, we consider the  simple case $\rho_{mn}=0, m\neq n$, then the homogeneous ordinary differential equation system \eqref{BS-d} is 
		
		\begin{equation}\label{eq:d-dim}
			\frac{d{  U}}{d\tau }={\bf A}{  U},\quad { U }(0)={   U}_0.
		\end{equation}
		
		In this case, $ {\bf A}\neq  -{\bf A}^{\dagger}$, we decompose the matrix as
		\begin{equation*}
			\begin{aligned}
				{\bf A}&={\bf A}_1+i{\bf A}_2,\\
				{\bf A}_1&=( {\bf A}+ {\bf A}^\dagger)/{2}={\bf A}_1^\dagger, \quad 	{\bf A}_2=( {\bf A}- {\bf A}^\dagger)/{2i}={\bf A}_2^\dagger,\\
			\end{aligned}
		\end{equation*}
		where ${\bf A}_i=\sum_{j=1}^d  I^{\otimes (j-1) n_x}\otimes A_i\otimes I^{\otimes (d-j) n_x}$, ($i=1,2$), $A_i$ is defined in \eqref{decompose}.
		
		Applying the Schr\"{o}dingerisation method, the warped phase transformation ${ \cal V}(\tau, p)=e^{-p}  \bar{  U}(\tau)$, $p>0$, satisfies,
		\begin{equation*}
			\partial_\tau { \cal V}(\tau,p)={\bf A}_1 {  \cal V}(\tau,p)+i {\bf A}_2 {  \cal V}(\tau,p)=-{\bf A}_1 \partial_p {  \cal V}(\tau,p)+i {\bf A}_2 { \cal  V}(\tau,p), \quad p>0,
		\end{equation*}
		and this equation is extended to $p<0$ with initial data ${   \cal V}(0,p)=e^{-|p|}{ U}(0)$.
		After the Fourier transform with respect to $p$, we obtain,
		\begin{equation*}
			\partial_\tau  \hat{ \cal  V}(\tau,\eta)=i(\eta {\bf A}_1+{\bf A}_2) \hat { \cal V}(\tau,\eta)\triangleq i {\bf H}_{BS}\hat {  \cal V}(\tau ,\eta),
		\end{equation*}
		with ${\bf H}_{BS}$ being Hermitian.

		\subsection{Quantum circuit}
		In the $d$-dimensional case, one has
		\begin{equation}\label{dhbs}
			\begin{aligned}
				{\bf H}_{BS}&={\bf A}_1\otimes D_\eta+{\bf A}_2\otimes I^{\otimes n_p}\\
				&=\sum_{k=0}^{N_p-1}\left(k-\frac{N_p}{2}\right) 
				\sum_{m=1}^d \left( \frac{\sigma_m^2  \rho_{mm}}{2}({  H}_1)_{m}\right) \otimes|k\rangle \langle k|\\
				&+  \sum_{m=1}^{d}\left(r-\frac{\sigma^2_{m}}{2}\right)({ H}_2)_m    \otimes I^{\otimes n_p},
			\end{aligned}
		\end{equation}
		where 
		$({ H}_1)_{mm}:=I^{\otimes(m-1)n_x} \otimes { H}_1 \otimes I^{\otimes(d-m)n_x}$,
		$({ H}_2)_m:=I^{\otimes(m-1)n_x} \otimes { H}_2 \otimes I^{\otimes(d-m)n_x}, H_i (i=1,2)$ defined in \eqref{h123}.
		
		By employing the first-order Lie-Trotter-Suzuki decomposition, we can approximate the time evolution operator ${\bf U}_{BS}(\tau):=\exp(i {\bf H}_{BS}\tau )$,
		
		\begin{equation}\label{dubs}
			\begin{aligned}
				{\bf U}_{BS}(\tau) \approx& \exp\left(i\tau \sum_{m=1}^{d}\left(r-\frac{\sigma^2_{m}}{2}\right)({ H}_2)_m  \otimes I^{\otimes n_p}\right)\\					&	\exp\left(i\tau\sum_{k=0}^{N_p-1}\left(k-\frac{N_p}{2}\right) 
				\sum_{m=1}^d \frac{\sigma_m^2 \rho_{mm}}{2}({  H}_1)_{m}\right) \otimes|k\rangle \langle k|\\
				=&\displaystyle\prod_{m=1}^{d} {U}_2\left( \left(r-\frac{\sigma^2_{m}}{2}\right)\tau  \right)_m 
				\otimes I^{\otimes n_p}\\
				&\sum_{k=0}^{N_p-1} \left(\displaystyle\prod_{m=1}^{d}  { U}_1\left( \frac{\sigma_m^2 \rho_{mm}}{2}\tau\right)_{m}  \right)^{k-N_p/2}
				\otimes|k\rangle \langle k|\\
				\triangleq&  {\bf U}_{*}(\tau).
			\end{aligned}
		\end{equation}

		Similar to the  1-dimensional case, using  the first-order Lie-Trotter-Suzuki decomposition, the time evolution operator ${{\bf U}_{BS}(\tau )}$ is approximated by
		\begin{equation}\label{dvbs}
			\begin{aligned}
				{\bf V}_{BS}(\tau)&:=   \tilde{\bf V}_2 (\tau)\otimes I^{\otimes n_p}
				\sum_{k=0}^{N_p-1}\tilde{\bf V}_1(\tau)^{k-N_p/2}
				\otimes|k\rangle \langle k|,
			\end{aligned}
		\end{equation}
		where 	 \[
		\tilde{\bf V}_1(\tau)=\displaystyle\prod_{m=1}^{d} { V}_1\left(\frac{\sigma_m^2 \rho_{mm}}{2}\tau\right)_{m},\]
		
		\[\tilde{\bf V}_2(\tau)=\displaystyle\prod_{m=1}^{d} { V}_2 \left(\left(r-\frac{\sigma_m^2}{2}\right)\tau\right)_m,\]
		and $ V_i(\tau)_m=I^{\otimes(m-1)n_x} \otimes V_i\otimes I^{\otimes (d-m)n_x}$ with $V_i, (i=1,2)$ defined in  \eqref{appro_V}.
		
		Furthermore,we utilize the binary representation
		of integers $k=(k_{n_p-1 \cdots k_0.})=\sum_{m=0}^{n_p-1} k_m 2^m$ to minimize the gate count
		\begin{equation*}
			\begin{aligned}
				{{\bf {V}}_{BS}(\tau)}	=(\tilde{\bf V}_2(\tau)
				\tilde{\bf V}_1(\tau)^{-N_p/2}
				\otimes I^{\otimes n_p})
				{{\prod}'}_{m=0}^{n_p-1} \left(\tilde{\bf V}_1(\tau)^{2^m}
				\otimes|1\rangle \langle 1|
				+I^{\otimes d n_x}\otimes|0\rangle \langle 0|\right).
			\end{aligned}
		\end{equation*}

		For	the $d$-dimension Black-Scholes equation,	the detailed quantum circuits for the approximating unitary matrices  are depicted in Figure \ref{Vd11}-\ref{VBSd}.
		The full circuit to implement the Schr\"{o}dingerisation method is shown in Figure \ref{schro} with $dn_x+n_p$ qubits.
		
		\begin{figure}[h!] 
			\centerline{
				\Qcircuit @C=1em @R=.7em
				{\lstick{q^1}  & \qw&{/^{n_x} }\qw& \qw & \multigate{5}{\tilde {\bf V}_1(\tau)} &\qw& && \gate{V_1(\frac{\sigma_1^2\rho_{11}}{2} \tau)} &\qw&\qw&\qw&\qw\\
					\lstick{q^2}& \qw&{/^{n_x} }\qw & \qw &\ghost{\tilde {\bf V}_1\tau)}&\qw&&&\qw & \gate{V_1(\frac{\sigma_2^2\rho_{22}}{2} \tau)}&\qw&\qw&\qw\\
					\lstick{	\vdots}\\
					\lstick{q^j}& \qw&{/^{n_x} }\qw& \qw &\ghost{\tilde {\bf V}_1(\tau)}&\qw&{:=} &&\qw&\qw& \gate{V_1(\frac{\sigma_j^2\rho_{jj}}{2} \tau)}&\qw&\qw\\ 
					\lstick{	\vdots}\\
					\lstick{q^d}& \qw&{/^{n_x} }\qw& \qw&\ghost{\tilde {\bf V}_1(\tau)}&\qw&&&\qw&\qw&\qw & \gate{V_1(\frac{\sigma_d^2\rho_{dd}}{2} \tau)}&\qw
				}
			}
			\caption{Quantum circuit for $\tilde {\bf V}_1(\tau)$}
			\label{Vd11}
		\end{figure}
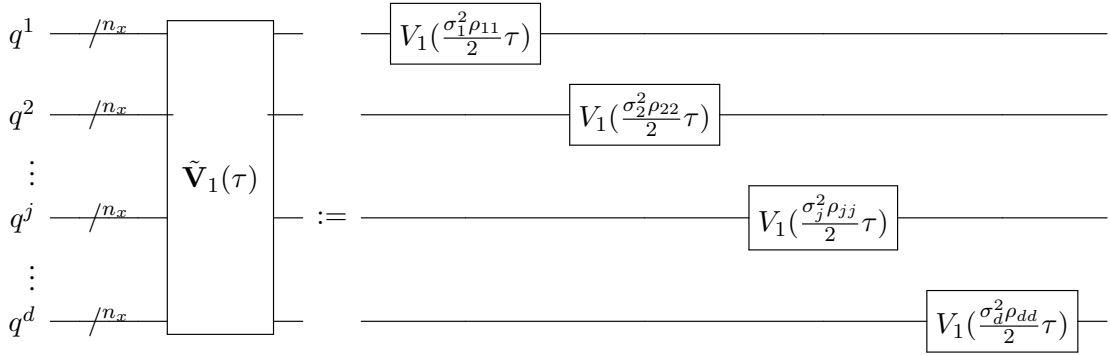

		\begin{figure}[h!] 
			\centerline{
				\Qcircuit @C=1em @R=.7em
				{\lstick{q^1} & \qw &{/^{n_x} }\qw& \qw & \multigate{5}{\tilde {\bf V}_2(\tau)} &\qw && &\qw& \gate{V_2((r-\frac{\sigma_1^2}{2}) \tau)}  & \qw&\qw& \qw& \qw&\qw\\
					\lstick{q^2}& \qw&{/^{n_x} }\qw & \qw &\ghost{\tilde {\bf V}_2(\tau)}&\qw && &\qw&\qw & \gate{V_2((r-\frac{\sigma_2^2}{2}) \tau)}  & \qw&\qw& \qw& \qw\\
					\lstick{	\vdots}\\
					\lstick{q^j}& \qw&{/^{n_x} }\qw& \qw &\ghost{\tilde {\bf V}_2(\tau)}&\qw&{:=}&&\qw&\qw&\qw & \gate{V_2((r-\frac{\sigma_j^2}{2}) \tau)}  & \qw&\qw& \qw\\ 
					\lstick{	\vdots}\\
					\lstick{q^d}& \qw&{/^{n_x} }\qw& \qw&\ghost{\tilde {\bf V}_2(\tau)}& \qw&&& \qw&\qw&\qw&\qw&\gate{V_2((r-\frac{\sigma_d^2}{2}) \tau)}  & \qw&\qw
				}
			}
			\caption{Quantum circuit for $\tilde {\bf V}_2(\tau)$}
			\label{Vd3}
		\end{figure}
		
		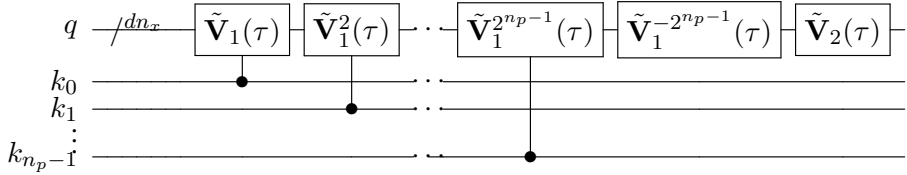
\begin{figure}[h!] 
			\centerline{
				\Qcircuit @C=.5em @R=.7em
				{\lstick{q} &\qw&\qw&{/^{dn_x } }\qw&\qw &\qw &\qw &  \gate{\tilde{\bf V}_1(\tau)} & \gate{\tilde{\bf V}_1^2(\tau)} &\qw& \cdots & & \gate{\tilde{\bf V}_1^{2^{n_p-1}}(\tau)}&\gate{\tilde{\bf V}_1^{-2^{n_p-1}}(\tau)} &   \gate{\tilde{\bf V}_2(\tau)} &\qw\\
					\lstick{	k_0}&\qw&\qw&\qw &\qw&\qw &\qw &  \ctrl{-1}&\qw &\qw& \cdots &   & \qw & \qw  &\qw&\qw \\
					\lstick{k_1}&\qw&\qw&\qw &\qw &\qw&\qw & \qw& \ctrl{-2}&\qw& \cdots & &\qw& \qw&\qw& \qw\\
					\lstick{	\vdots}\\
					\lstick{k_{n_p-1}}&\qw&\qw&\qw &\qw &\qw&\qw &\qw&\qw &\qw& \cdots &\qw&\ctrl{-4} & \qw  & \qw &\qw  
				}
			}
			\caption{Quantum circuit for ${\tilde{\bf V} }_{BS}$}
			\label{VBSd}
		\end{figure}
		
		\begin{remark}
			For the case  $\rho_{mn}\neq 0, m\neq n$, we notice that the Schr\"{o}dingerisation method is still applicable.
			While  using 
			the Lie-Trotter-Suzuki decomposition,
			the decomposition of the coupled quantum gates $\exp(S^+\otimes D_D^{\pm})$ should be considered 
			as discussed in Remark \ref{remark1}.
			
		\end{remark}			
		
		\section{Complexity analysis}
		
		In this subsection, we estimate the complexity of the quantum circuits of the one-dimensional and $d$-dimensional Black-Scholes equation constructed in the previous
		sections and demonstrate their scalability.
		
	\begin{lemma}\label{lem}
		For the Schrödinger equation
		$
		\frac{d|{\bf u}(t)\rangle}{dt} = iH_{BS}|{\bf u}(t)\rangle,
		$
		with the Hamiltonian \( H_{BS} \) as specified in Equation \eqref{hbs}, the time evolution operator \( U_{BS}(\tau) = \exp(iH_{BS}\tau) \) over a time step \( \tau \) can be effectively approximated by the unitary operator \( V_{BS}(\tau) \), which is detailed in Equation \eqref{vbs}. The quantum circuit realization of \( V_{BS}(\tau) \) is depicted in Figures \ref{W_source} - \ref{VBS}. In addition, the approximation error evaluated in the operator norm, is bounded  by
		\begin{equation}\label{err}
			\|U_{BS}(\tau) - V_{BS}(\tau)\| \leq \frac{1}{4} \tau^2 n_x (N_p \gamma_1^2 + 2 \gamma_2^2 + 2N_p \gamma_1\gamma_2)c^2,
		\end{equation}
		where \( N_p = 2^{n_p} \) and \( N_x - 1 = 2^{n_x} \) correspond to the number of grid points for the variables \( p \) and \( x \), respectively. The parameters \( \gamma_1 \) and \( \gamma_2 \) are defined in Equation \eqref{h123}, and \( c \) is given by \( c = \max\left(\frac{\sigma^2}{2}, r - \frac{\sigma^2}{2}\right) \).
	\end{lemma}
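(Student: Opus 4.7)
I would propagate the error through the nested first-order Lie--Trotter--Suzuki decompositions that take $U_{BS}(\tau)$ to $V_{BS}(\tau)$ in \eqref{ubs}--\eqref{vbs}, using the elementary estimate
\[
\|e^{i\tau(A+B)} - e^{i\tau A}e^{i\tau B}\| \le \tfrac{1}{2}\tau^{2}\|[A,B]\| \le \tau^{2}\|A\|\|B\|,
\]
together with unitary invariance of the operator norm and the triangle inequality to concatenate successive Trotter errors.

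The argument has three layers. First, the outer three-way split in \eqref{ubs} decomposes $H_{BS}$ into the $(H_{2},\,Y\!\otimes\! B)$ summand applied once per step, the $H_{1}$-carrying block-diagonal (in $k$) summand, and the $X\!\otimes\! B$ summand coupled to $|k\rangle\langle k|$; pairwise Trotter bounds on these three contribute the cross term $\gamma_{1}\gamma_{2}$ and, via the spectral factor $|k-N_{p}/2|\le N_{p}/2$ coming from $D_{\eta}$, the $N_{p}$ multipliers. Second, the inner splits in \eqref{hamiltion1}--\eqref{hamiltion3} approximate $U_{1}(\alpha\tau)$ and $U_{2}(\alpha\tau)$, with $\alpha\le c=\max(\sigma^{2}/2,\,r-\sigma^{2}/2)$, by products of the $W_{j}$ gates. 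Writing each of these errors as a telescoping sum over $j=1,\ldots,n_{x}$, a generic telescope step is controlled by $\tau^{2}\gamma_{i}^{2}\|s_{j}^{-}+s_{j}^{+}\|\cdot\|\sum_{k>j}(s_{k}^{-}+s_{k}^{+})\|$; since $\|s_{j}^{\pm}\|=1$ and the partial shift-sum $\sum_{k>j}(s_{k}^{-}+s_{k}^{+})$ has spectral norm bounded by a constant (it is a subblock of $S^{-}+S^{+}$, whose norm is $2$), each step contributes $O(\tau^{2}\gamma_{i}^{2})$, which yields the $n_{x}$-linear (not $n_{x}^{2}$) scaling present in \eqref{err}. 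Third, the controlled power $\tilde V_{1}(\tau)^{k-N_{p}/2}$ in \eqref{vbs} supplies an additional factor $|k-N_{p}/2|\le N_{p}/2$ on the $V_{1}$-layer via the standard unitary inequality $\|\tilde V_{1}^{m}-\tilde U_{1}^{m}\|\le|m|\,\|\tilde V_{1}-\tilde U_{1}\|$, while the $V_{2}$-layer, applied only once per time step, receives no such amplification.

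Collecting contributions, the $V_{1}$--$V_{1}$ error accumulates as $n_{x}N_{p}\gamma_{1}^{2}c^{2}\tau^{2}$, the $V_{2}$--$V_{2}$ error as $n_{x}\gamma_{2}^{2}c^{2}\tau^{2}$, and the mixed $V_{1}$--$V_{2}$ cross term as $n_{x}N_{p}\gamma_{1}\gamma_{2}c^{2}\tau^{2}$; the overall $\tfrac{1}{4}$ prefactor arises from combining the $\tfrac{1}{2}$ in the elementary Trotter bound with $\max_{k}|k-N_{p}/2|=N_{p}/2$. The main obstacle I foresee is the bookkeeping across the nested Trotter layers: in particular, one must show that the ancillary contributions involving the single-entry matrix $B$ and the phase operators $\mathrm{Ph}(\pm r\tau/L_{p})$ either commute with the dominant $H_{1},H_{2}$ summands or contribute only $O(\tau^{3})$ corrections that can be absorbed into the stated $c^{2}$ coefficient; verifying this absorption cleanly, so that only the $\gamma_{1}^{2}$, $\gamma_{2}^{2}$, and $\gamma_{1}\gamma_{2}$ terms survive at order $\tau^{2}$ with the claimed constants, is the central technical point.
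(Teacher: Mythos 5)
Your plan follows essentially the same route as the paper's proof: an outer Lie--Trotter split of $H_{BS}$ bounded by the commutator $[H_1\text{-term},H_2\text{-term}]$ with the $N_p/2$ spectral factor from $D_\eta$, inner splits of $U_1,U_2$ into products of $W_j$ gates with the $(n_x-1)$-linear commutator sum over shift operators, and the $|k-N_p/2|\le N_p/2$ amplification of the $\tilde V_1$ error, summed by the triangle inequality. The bookkeeping issue you flag regarding the $B$ and phase contributions is simply left implicit in the paper (its commutator estimates retain only the $H_1,H_2$ summands), so your proposal is, if anything, slightly more explicit about the same argument.
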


		\begin{proof}
			Recalling the definition of $U_1(\tau), U_2(\tau)$, in equation \eqref{hamiltion1} and \eqref{hamiltion3}, $V_1(\tau), V_2(\tau)$ in equation \eqref{appro_V}.
			According to the theory of the Trotter splitting error with commutator scaling \cite{child} and
			the analysis in \cite{sato2024}, we have
			
			\begin{equation*}
				\begin{aligned}
					&	\|U_1(\tau)-V_1(\tau)\|\\
					=&\left\|\exp\left(i\tau \gamma_1 \sum_{j=1}^{n_x}\left(s_j^-+s_j^+\right)\right)-
					\displaystyle\prod_{j=1}^{n_x} I^{\otimes(n_x-j)} \otimes W_j(\gamma_1\tau,0)
					\right\|\\
					\leq &\frac{\gamma_1^2\tau^2}{2} \sum_{j=1}^{n_x} \sum_{j'=j+1}^{n_x} 
					\|[(s_j^-+s_j^+), (s_{j'}^-+s_{j'}^+)]\|
					\leq \frac{\gamma_1^2\tau^2(n_x-1)}{2},
				\end{aligned}
			\end{equation*}
			where for the last inequality, we have applied the commutator results (see e.g.\cite{hu2024} )
			\[\sum_{j=1}^{n_x} \sum_{j'=j+1}^{n_x} 
			\|[(s_j^-+s_j^+), (s_{j'}^-+s_{j'}^+)]\|=n_x-1.\]			
			Similarly, we have
			\begin{equation*}
				\|U_2(\tau)-V_2(\tau)\|	\leq \frac{\gamma_2^2\tau^2(n_x-1)}{2}.
			\end{equation*}
			
			The Trotter error of the first decomposition equation \eqref{ubs} is
			\begin{equation*}
				\begin{aligned}
					&\|U_{BS}(\tau)-U_*(\tau)\|\\
					\leq& \frac{1}{2}\left\|\left[\sum_{k=0}^{N_p-1}\left(k-\frac{N_p}{2}\right)|0\rangle \langle0|\otimes {H}_1\otimes|k\rangle \langle k|, |0\rangle \langle0|\otimes {H}_2 \otimes I^{\otimes n_p}\right]\right\|\\
					\leq & \frac{1}{2} N_p \gamma_1  \gamma_2  n_x \left(r-\frac{\sigma^2}{2}\right) \frac{\sigma^2}{2}.
				\end{aligned}									
			\end{equation*}
			
			
			The error between $U_*$ and $V_{BS}$ is
			\begin{equation*}
				\begin{aligned}
					&\|U_*(\tau)-V_{BS}(\tau)\|\\
					\leq& \left\| {U}_2\left(\left(r-\frac{\sigma}{2}^2\right)\tau\right)-\tilde{V}_2(\tau) \right\| 
					+\displaystyle\max_{0\leq k\leq N_p-1}\left| k-\frac{N_p}{2}\right| \left\|   {U}_1\left(\frac{\sigma^2}{2}\right)-\tilde{V}_1(\tau)\right\|\\
					\leq & \frac{\gamma_2^2 \tau^2 n_x}{2} \left(r-\frac{\sigma^2}{2}\right)^2 
					+\frac{N_p}{2} \frac{\gamma_1^2 \tau^2 n_x}{2} \left(\frac{\sigma^2}{2}\right)^2.
				\end{aligned}
			\end{equation*}
			To sum up, we have
			\begin{equation*}
				\begin{aligned}
					\|U_{BS}(\tau)-V_{BS}(\tau)\|\leq &\|U_{BS}(\tau)-U_*(\tau)\|+\|U_*(\tau)-V_{BS}(\tau)\|\\
					\leq & \frac{1}{4} \tau^2 n_x (N_p \gamma_1^2+2 \gamma_2^2+2N_p \gamma_1\gamma_2)c^2,
				\end{aligned}
			\end{equation*}
			where $c=\max\left(\frac{\sigma^2}{2}, r-\frac{\sigma^2}{2}\right).$
		\end{proof}
		
		\begin{lemma}\label{gate}
			The approximated time evolution operator $V_{BS}(\tau )$ in  \eqref{vbs} can be implemented using ${\cal Q}_{single}=O(N_pn_x)$ single-qubit gates and at most ${\cal Q}_{V_{BS}} = O(N_pn_x^2)$ {\rm CNOT} gates for
			$n_x \geq 3$.
		\end{lemma}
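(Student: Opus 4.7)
The plan is to count single-qubit and CNOT gates at each level of the construction of $V_{BS}(\tau)$, starting from the innermost primitive $W_j(\gamma\tau,\lambda)$ and working outward, making essential use of the binary-representation trick to avoid the naive $O(N_p)$ blowup in the $|k\rangle$ sum. I begin with $W_j$ as depicted in Figure \ref{W_source}: it consists of two symmetric layers of $j-1$ CNOT gates framing a multi-controlled $\text{CRZ}_j^{1,\ldots,j-1}(-2\gamma\tau)$, together with $O(1)$ Hadamard and Phase gates. The multi-controlled RZ with $j-1$ controls admits a standard Barenco-style decomposition into $O(j)$ CNOTs and $O(j)$ single-qubit rotations, so each $W_j$ contributes $O(j)$ gates of each type.

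Summing over $j=1,\ldots,n_x$ in the Trotter products from \eqref{appro_V} yields $\sum_{j=1}^{n_x} O(j)=O(n_x^2)$ CNOTs and $O(n_x^2)$ single-qubit gates for each of $V_1(\tau)$ and $V_2(\tau)$, and hence for the dressed blocks $\tilde{V}_1(\tau)$ and $\tilde{V}_2(\tau)$. The extra factors $U_1^{(1)}, U_2^{(1)}$ require only $O(n_x)$ gates each, because $B$ in \eqref{dilate} has a single nonzero diagonal entry, so their exponentials reduce to multi-controlled $R_x$/$R_y$ rotations selecting the last computational-basis index. The outer one-qubit control that distinguishes the two blocks of the dilated system adds only a constant multiplicative overhead, since each CNOT under that control becomes a Toffoli with a fixed-size decomposition.

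Finally I invoke the binary-representation decomposition shown in Figure \ref{VBS}: the sum $\sum_k \tilde{V}_1^{k-N_p/2}\otimes |k\rangle\langle k|$ collapses into $n_p$ singly-controlled copies of $\tilde{V}_1^{2^m}$ for $m=0,\ldots,n_p-1$, together with the uncontrolled prefactor $\tilde{V}_2\,\tilde{V}_1^{-N_p/2}$. The key structural point is that raising $\tilde{V}_1$ to an integer power only rescales the rotation angles inside every exponential factor $e^{i\theta H}$, so $\tilde{V}_1^{2^m}$ has the same gate count as $\tilde{V}_1$, independent of $m$. Multiplying the per-block counts by the number of controlled copies, and absorbing the constant overhead from the extra control qubit, gives the single-qubit and CNOT bounds claimed in the lemma.

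The main obstacle will be keeping the constants transparent across the nested multi-controlled structure: both the interior $\text{CRZ}_j^{1,\ldots,j-1}$ and the outer one-qubit control on each $\tilde{V}_1^{2^m}$ must be handled so that the cost of a single $W_j$ stays $O(j)$ even after the additional controls are accounted for. The hypothesis $n_x \geq 3$ is precisely what is needed to absorb the small-case overhead of these multi-controlled rotation decompositions into the stated asymptotic bounds. Once this bookkeeping is in place, the remaining steps—summing over $j$ and then over the $n_p$ binary levels—are routine.
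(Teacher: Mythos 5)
Your outer bookkeeping (count $W_j$, sum over $j$, then handle the $\ket{k}$ register via the binary decomposition) matches the paper's structure, but there is a genuine gap at the decisive step. You claim that ``raising $\tilde{V}_1$ to an integer power only rescales the rotation angles inside every exponential factor, so $\tilde{V}_1^{2^m}$ has the same gate count as $\tilde{V}_1$.'' This is false: $\tilde{V}_1(\tau)$ is a \emph{product} of non-commuting exponentials (the $W_j$ built from $s_j^\pm$ do not commute across $j$, nor with the phase and $U_1^{(1)}$ factors), so $\bigl(\prod_j e^{i\theta H_j}\bigr)^{2^m}\neq \prod_j e^{i2^m\theta H_j}$. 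Replacing $\tilde V_1^{2^m}$ by the angle-rescaled circuit would change the operator being implemented, and the error bound of Lemma \ref{lem} -- which is proved for the exact powers appearing in \eqref{vbs} -- would no longer apply (a single Trotter step of effective size $2^m\tau$ carries error $O((2^m\tau)^2)$, destroying the subsequent complexity analysis in Lemma \ref{lem1}). The paper instead implements $\tilde V_1^{2^m}$ as $2^m$ sequential controlled copies of $\tilde V_1$, so the total number of controlled-$\tilde V_1$ blocks is $\sum_{m=0}^{n_p-1}2^m=N_p-1$, plus $2^{n_p-1}$ uncontrolled copies for the $\tilde V_1^{-N_p/2}$ prefactor. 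That is precisely where the factor $N_p$ in ${\cal Q}_{V_{BS}}=O(N_p n_x^2)$ comes from; your route would yield $O(n_p n_x^2)$, which does not reproduce the stated bounds and is obtained by an invalid identity.

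A secondary inconsistency: you assign $O(j)$ single-qubit gates to each $W_j$, which gives $O(n_x^2)$ single-qubit gates per block and hence cannot yield the lemma's $O(N_p n_x)$ single-qubit count by your own multiplication. The paper's count charges each $W_j$ only $O(1)$ single-qubit gates (the two Hadamards and the phase/$P(\lambda)$ gates), books the $(16j-40)$ gates from the Barenco-style decomposition of the $(j-1)$-controlled RZ entirely to the CNOT budget, and arrives at $2n_x+2n_x+2$ single-qubit gates per block times $O(N_p)$ copies. You should either adopt that accounting or explicitly justify why the rotations inside the multi-controlled decomposition do not inflate ${\cal Q}_{single}$ beyond $O(N_p n_x)$.
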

		
		\begin{proof}
			There are
			a $\tilde{V}_2(\tau)$ gate, a maximum of $2^{n_p-1} \tilde{V}_1$ gates and $\sum_{m=0}^{n_p-1} 2^m=2^{n_p}-1$ controlled $\tilde{V}_1$ gates
			in $V_{BS}(\tau )$.
			Furthermore, $\tilde{V}_1$ consists of
			2 phase gate ${\rm Ph}$,  $W_j$, $(j = 1, \cdots, n_x)$, and a          multi-controlled $R_x$ gate.
			$\tilde{V}_2$ consists of $W_j$, $(j = 1, \cdots, n_x)$, and a 
			multi-controlled $R_y$ gate.
			$ W_j$ can be decomposed into a multi-controlled ${\rm RZ}$ gate
			(a controlled RZ gate for $j=2$), 2 Hadmard gates and $2(j-1)$ CNOT gates in the case $\lambda=0$.
			while $\lambda\neq 0$, decomposition of $W_j$ have  two more $P$ gates.
			
			Noticing that we need two more $X$ gates to transfer the control state from $|0\rangle$ to
			$|1\rangle$ and transfer back.
			Hence the number of single-qubit gates is 
			\begin{equation*}
				\begin{aligned}
					{\cal Q}_{single}=2^{n_p-1}  (2n_x+2n_x+2)
					+2n_x+2n_x+2
					=O(N_p n_x).
				\end{aligned}
			\end{equation*}
			As for CNOT gates,	according to \cite{sato2024} \cite{vale}, a multicontrolled
			$RZ$ or $RX$ gate with $(j - 1)$ control qubits can be decomposed into single-qubit gates	and at most $(16j - 40)$ CNOT gates. Thus, we have,
			\begin{equation*}
				{\cal Q}_{V_2}=	{\cal Q}_{V_1}=2+\sum_{j=3}^{n_x+1}(16j-40)+2n_x+2n_x+\sum_{j=3}^{n_x} 2(j-1)*8+16(n_x+1)-40=16 n_x^2-4n_x-14.
			\end{equation*}
			
			The controlled $V_1$ gate consists of a controlled phase gate and controlled  $W_j$, $(j = 1, \cdots, n_x)$. $c-W_j$
			can be decomposed into a multi-controlled RZ gate (a controlled RZ gate for $j = 1$), two
			controlled H gates, and $2(j - 1)$ controlled CNOT gates. Therefore, the CNOT gates
			in  controlled $V_{1}$  gate can be estimated by
			\begin{equation*}
				{\cal Q}_{c-V_1}=2+\sum_{j=3}^{n_x+2} (16j-40)+2n_x+2n_x+\sum_{j=3}^{n_x} 2(j-1)*24=16(n_x+2)-40+2=32n_x^2-12n_x-66.
			\end{equation*}
			Then, the CNOT gates in  $V_{BS}$ can be estimated
			\begin{equation*}
				\begin{aligned}
					{\cal Q}_{BS}={\cal Q}_{V_2} + 2^{n_p-1} {\cal Q}_{V_1}+(2^{n_p}-1){\cal Q_{c-V_1}}
					=O(N_p n_x^2).
				\end{aligned}
			\end{equation*}
		\end{proof}

		\begin{lemma}\label{lem1}
			For time $T$,  the time evolution operator
			$U_{BS}(T) = \exp(iH_{BS}T)$  can be implemented on a $(n_x +1+ n_p)$-qubit system
			using quantum circuits with $O(N_p  T^2 n_x^2 (N_p \gamma_1^2+2 \gamma_2^2+2N_p \gamma_1\gamma_2)c^2/\varepsilon)$ single-qubit gates and  $O(N_p n_x^3  T^2  (N_p \gamma_1^2+2 \gamma_2^2+2N_p \gamma_1\gamma_2)c^2/\varepsilon)$ CNOT gates, within an additive error of
			$\varepsilon$. 
		\end{lemma}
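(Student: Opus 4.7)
The plan is to build up $U_{BS}(T)$ by composing $N$ short-time steps of size $\tau = T/N$, approximate each step by the circuit $V_{BS}(\tau)$ from Lemma~\ref{gate}, and choose $N$ just large enough that the accumulated Trotter error from Lemma~\ref{lem} stays below $\varepsilon$. The first step I would carry out is the standard telescoping estimate for the product of unitaries,
\begin{equation*}
\|U_{BS}(\tau)^N - V_{BS}(\tau)^N\| \;\leq\; N\,\|U_{BS}(\tau) - V_{BS}(\tau)\|,
\end{equation*}
which follows by writing $U_{BS}^N - V_{BS}^N = \sum_{k=0}^{N-1} V_{BS}^{k}\bigl(U_{BS} - V_{BS}\bigr)U_{BS}^{N-1-k}$ and invoking unitarity together with the triangle inequality.

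Plugging the single-step bound from Lemma~\ref{lem} into this telescoping estimate gives
\begin{equation*}
\|U_{BS}(T) - V_{BS}(\tau)^N\| \;\leq\; N \cdot \tfrac{1}{4}\tau^{2} n_x (N_p\gamma_1^{2}+2\gamma_2^{2}+2N_p\gamma_1\gamma_2)c^{2} \;=\; \tfrac{T^{2}}{4N}\, n_x (N_p\gamma_1^{2}+2\gamma_2^{2}+2N_p\gamma_1\gamma_2)c^{2},
\end{equation*}
so requiring the right-hand side to be at most $\varepsilon$ yields
\begin{equation*}
N \;=\; O\!\Bigl(T^{2} n_x (N_p\gamma_1^{2}+2\gamma_2^{2}+2N_p\gamma_1\gamma_2)c^{2}/\varepsilon\Bigr)
\end{equation*}
as the required number of Trotter steps. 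Then I would invoke Lemma~\ref{gate} to bound the cost of each step by $O(N_p n_x)$ single-qubit gates and $O(N_p n_x^{2})$ CNOT gates, and multiply through by $N$ to obtain the claimed totals
\begin{equation*}
N \cdot O(N_p n_x) \;=\; O\!\bigl(N_p T^{2} n_x^{2} (N_p\gamma_1^{2}+2\gamma_2^{2}+2N_p\gamma_1\gamma_2)c^{2}/\varepsilon\bigr),
\end{equation*}
\begin{equation*}
N \cdot O(N_p n_x^{2}) \;=\; O\!\bigl(N_p T^{2} n_x^{3} (N_p\gamma_1^{2}+2\gamma_2^{2}+2N_p\gamma_1\gamma_2)c^{2}/\varepsilon\bigr).
\end{equation*}
The qubit count stays at $n_x+1+n_p$ across all steps because $V_{BS}(\tau)$ acts on the same register and no ancillas are consumed.

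There is essentially no combinatorial obstacle here; the lemma is a bookkeeping corollary of Lemmas~\ref{lem}--\ref{gate}. The only point that warrants care is making sure the telescoping bound applies: this requires that $V_{BS}(\tau)$ be unitary (true, since it is a product of elementary quantum gates) and that the same circuit be reused in each of the $N$ steps, which is transparent from the definition of $V_{BS}$ as a fixed function of $\tau$. I would end the proof by observing that the whole argument is insensitive to logarithmic factors, so the stated $O(\cdot)$ bounds are tight up to the usual implicit constants in Lemma~\ref{gate}.
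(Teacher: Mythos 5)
Your proposal is correct and follows essentially the same route as the paper's own proof: telescope the error over $N=T/\tau$ steps, use the single-step bound of Lemma~\ref{lem} to fix $N = O\bigl(T^{2} n_x (N_p\gamma_1^{2}+2\gamma_2^{2}+2N_p\gamma_1\gamma_2)c^{2}/\varepsilon\bigr)$, and multiply by the per-step gate counts of Lemma~\ref{gate}. The only difference is that you justify the telescoping inequality explicitly, which the paper simply asserts.
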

		\begin{proof}
		The time interval $T$ is partitioned into $N$ subintervals, with $N = T/\tau$. It is assumed that there exists a small $\varepsilon$,
			\begin{equation}\label{err1}
				\|U_{BS}^N(\tau)-V_{BS}^N(\tau)\|\leq N	\|U_{BS}(\tau)-V_{BS}(\tau)\|\leq \varepsilon.
			\end{equation}
			With the results \eqref{err}, we have
			\begin{equation*}
				\frac{1}{4N} T^2 n_x (N_p \gamma_1^2+2 \gamma_2^2+2N_p \gamma_1\gamma_2)c^2 \leq \varepsilon,
			\end{equation*}
			which can be rearranged as
			\begin{equation*}
				N\geq \frac{1}{4 \varepsilon} T^2 n_x (N_p \gamma_1^2+2 \gamma_2^2+2N_p \gamma_1\gamma_2)c^2.
			\end{equation*}
			According to Lemma \ref{gate}, $V_{BS}$ consists of $O(N_p n_x)$ single-qubit gates and at	most $O(N_p n_x^2)$ CNOT gates.
			Therefore, $V_{BS}^N$ can be implemented using $O(N_p  T^2 n_x^2 (N_p \gamma_1^2+2 \gamma_2^2+2N_p \gamma_1\gamma_2)c^2/\varepsilon)$ single-qubit gates and at
			most $O(N_p n_x^3  T^2  (N_p \gamma_1^2+2 \gamma_2^2+2N_p \gamma_1\gamma_2)c^2/\varepsilon)$ CNOT gates.
		\end{proof}

	\begin{theorem}\label{th1}
		For the one-dimensional Black-Scholes equation  \eqref{bs_equation}, the state $|{\bf u}(t)\rangle$, where ${\bf u}(t)$ is the solution of
		\eqref{finite} with a mesh size $h$, can be prepared with precision $\varepsilon'$ using the Schr\"{o}dingerisation
		method depicted in Figure \ref{VBS} and Figure \ref{schro}. This preparation can be achieved using at most
		$\tilde{O}\left(\frac{T^2 \|{\bf u}(0)\|^3}{h^4\varepsilon^3 \|{\bf u}(T)\|^3}\right) $
		single-qubit gates and {\rm CNOT} gates.
	\end{theorem}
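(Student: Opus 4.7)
I would combine four ingredients: (i) the per-step approximation bound from Lemma~\ref{lem}, (ii) the aggregated $N$-step simulation cost from Lemma~\ref{lem1}, (iii) the Schr\"odingerisation truncation and Fourier-discretization error that dictates how $L_p$ and $N_p$ must scale with the target precision, and (iv) amplitude amplification to compensate for the small success probability of the projection onto $|p_k^{\diamond}\rangle$. The four error budgets must sum, up to constants, to the desired precision $\varepsilon$.

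First I would fix the Schr\"odingerisation parameters so that the $p$-truncation and Fourier-discretization errors contribute at most $\varepsilon/3$ to the total error. Because the extended initial data ${\bf v}(0,p)=e^{-|p|}\bar{\bf u}(0)$ decays exponentially in $|p|$, one may take $L_p=O(\log(1/\varepsilon))$, while $N_p$ grows only polylogarithmically in $1/\varepsilon$, consistent with Theorem~5.1 of \cite{ma2024}. Under these choices the prefactor $N_p\gamma_1^2=N_p/(h^4L_p^2)$ appearing in the bound of Lemma~\ref{lem1} reduces to $\tilde O(1/h^4)$, up to polylog factors, and the same is true for the cross term $N_p\gamma_1\gamma_2$.

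Next I would quantify the Hamiltonian-simulation tolerance $\delta$ that must be supplied to Lemma~\ref{lem1}. The projection $P$ succeeds with probability of order $\|{\bf u}(T)\|^2/\|\hat{\bf v}(0)\|^2$, and $\|\hat{\bf v}(0)\|$ is of order $\|{\bf u}(0)\|$ up to polylog factors, as follows from the Fourier tails $1/(\eta_k^2+1)$ in \eqref{initial}. Amplitude amplification therefore needs $R=O(\|{\bf u}(0)\|/\|{\bf u}(T)\|)$ repetitions of $V_{BS}^N$; because the amplified state-vector error scales like $R\delta$ and is further inflated by the renormalisation factor $\|{\bf u}(0)\|/\|{\bf u}(T)\|$ induced by the small-amplitude post-measurement state, the effective tolerance must be taken as $\delta=\Theta(\varepsilon\,\|{\bf u}(T)\|^2/\|{\bf u}(0)\|^2)$.

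Substituting into the CNOT count of Lemma~\ref{lem1}, which collapses to $\tilde O\bigl(T^2/(h^4\delta)\bigr)$ per application under the scalings above (with one extra factor of $n_x$ absorbed into $\tilde O$ for the single-qubit count), and multiplying by the $R$ amplification rounds gives $R\cdot\tilde O\bigl(T^2/(h^4\delta)\bigr)=\tilde O\bigl(T^2\|{\bf u}(0)\|^3/(h^4\varepsilon^3\|{\bf u}(T)\|^3)\bigr)$, which matches the claim. The main technical obstacle is item (iv) together with the bookkeeping in (iii): one must verify that Lemma~\ref{lem1}'s operator-norm error, when combined with amplitude amplification and the renormalisation of a small-amplitude post-measurement state, introduces exactly the three factors of $\|{\bf u}(0)\|/\|{\bf u}(T)\|$ and $1/\varepsilon$ advertised, with all other dependencies (in particular on $N_p$, $L_p$, $n_x$) confined to polylog terms under $\tilde O$.
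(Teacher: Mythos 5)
Your overall architecture --- per-step Trotter bound, $N$-step cost from Lemma~\ref{lem1}, a $p$-discretization budget, and a final boost of the post-selection probability --- is the same skeleton the paper uses, but there is a concrete error in your step (iii) that breaks the arithmetic. You assert that $N_p$ need only grow polylogarithmically in $1/\varepsilon$. That is not available here: the warped-phase initial datum ${\bf v}(0,p)=e^{-|p|}\bar{\bf u}(0)$ is only Lipschitz at $p=0$, so the relative error of the $p$-discretization is $O\left(\pi L_p/N_p+e^{-\pi L_p}\right)$, i.e.\ first order in $\Delta p$, which forces $N_p=O(L_p/\varepsilon)=\tilde O(1/\varepsilon)$ --- exactly the choice the paper makes. (The reference to Theorem~5.1 of \cite{ma2024} governs the admissible $p_k^{\diamond}$, not the $N_p$ scaling; a polylogarithmic $N_p$ would require the smoothed profile $g(p)$ that the paper introduces only in the numerics section.) Since the gate count of Lemma~\ref{lem1} carries an overall factor $N_p\left(N_p\gamma_1^2+2\gamma_2^2+2N_p\gamma_1\gamma_2\right)$, i.e.\ $N_p^2=\tilde O(1/\varepsilon^2)$ after substituting $\gamma_1=1/(h^2L_p)$, suppressing $N_p$ into $\tilde O$ deletes two of the three powers of $1/\varepsilon$. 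Indeed, with your own scalings the product $R\cdot\tilde O\left(T^2/(h^4\delta)\right)$ evaluates to $\tilde O\left(T^2\|{\bf u}(0)\|^3/(h^4\varepsilon\,\|{\bf u}(T)\|^3)\right)$, not the advertised $1/\varepsilon^3$; the statement that it ``matches the claim'' is an arithmetic slip, and there is no remaining source in your argument for the missing $1/\varepsilon^2$.

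Your step (iv) also departs from the paper in a way you should reconcile. The paper does not use amplitude amplification: it repeats the measurement $O(\|{\bf u}(0)\|^2/\|{\bf u}(T)\|^2)$ times and pairs this with a per-run tolerance $\varepsilon'=O(\varepsilon\|{\bf u}(T)\|/\|{\bf u}(0)\|)$, the latter coming from the single inflation factor $e^{p_k^{\diamond}}\|{\bf v}(0)\|/\|{\bf u}(T)\|$ in the normalized output error; together these supply the three factors of $\|{\bf u}(0)\|/\|{\bf u}(T)\|$. You instead take $R=O(\|{\bf u}(0)\|/\|{\bf u}(T)\|)$ amplification rounds and compensate by squaring the inflation, $\delta=\Theta(\varepsilon\|{\bf u}(T)\|^2/\|{\bf u}(0)\|^2)$, on the grounds that the error is inflated both by $R$ and by renormalization. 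That is double counting: after amplification the good-subspace amplitude is $\Theta(1)$, so the renormalization penalty and the $R$-fold error accumulation are not both incurred at full strength, and a correct amplitude-amplification analysis would yield a \emph{better} (quadratic) dependence on $\|{\bf u}(0)\|/\|{\bf u}(T)\|$ rather than reproduce the cubic one. Either adopt the paper's repeat-until-success accounting, or carry the amplitude-amplification route through honestly and state the stronger bound it gives; as written, the two deviations cancel only because the target answer was imposed at the end.
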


	\begin{proof}
	
	As illustrated in Figure \ref{schro}, the quantum circuit consists of a quantum Fourier transform (QFT), an inverse quantum Fourier transform (IQFT), a unitary operation \( V_{BS}^N(\tau) \), and a projection \( P \). 
	
	First, the (inverse) quantum Fourier transform can be implemented using \( O(n_p^2) \) controlled rotation (CR) gates, which corresponds to \( O(n_p^2) \) CNOT gates \cite{lin2022, nie2010, vale}.
	
	Second, the complexity analysis for simulating the unitary \( V_{BS}^N(\tau) \), as concluded in Lemma \ref{lem1}, can be applied with \( n_x = O(\log(L_x/h)) \) and \( \gamma_1 = \frac{1}{h^2 L_p}, \gamma_2 = \frac{1}{2h} \). This results in a computational cost of \( O\left(\frac{T^2 N_p^2 \log^3(L_x/h) c^2}{h^4 L_p^2 \varepsilon'}\right) \), where \( \varepsilon' \) denotes the desired precision \( |U_{BS}^N(T) - V_{BS}^N(\tau)| \leq \varepsilon' \).
	
	Third, the number of measurements and the error of the output state can be discussed in the context of the given expressions \( |{\hat{\bf v}(T)}\rangle = U_{BS}^N(T)|{\hat{\bf v}(0)}\rangle \) and \( |{\hat{\bf v}_D(T)}\rangle = V_{BS}^N(T) |{\hat{\bf v}(0)}\rangle \). To retrieve \( |{\hat{\bf v}(T)}\rangle \) and \( |{\hat{\bf v}_D(T)}\rangle \), an inverse quantum Fourier transform can be performed on these states, followed by a projection \( P \) onto \( |p_k^\diamond\rangle \) through measurement. Given the relationship \( v(t,x,p) = e^{-p} u(t,x) \) for \( p > 0 \), one can observe that

	\begin{equation*}
		\frac{\|  P   {\bf v}(T)- e^{-p_k^\diamond} {\bf u}(T)|p_k^\diamond\rangle\|}{\|e^{-p_k^\diamond} {\bf u}(T)\|}
		=O\left(\frac{\pi L_p}{N_p}+e^{-\pi L_p}\right),
	\end{equation*}
	which represents the (relative) discretization error over the $p$ variable. It follows by,
	\begin{equation*}
		\begin{aligned}
			&	\left\| \frac{e^{p_k^\diamond}\|{\bf v}(0)\|}{\|{\bf u}(T)\|} P\ket{{\bf v}_D(T)}-\ket{{\bf u}(T)}\ket{p_k^\diamond}  \right\|\\
			\leq &\left\|  \frac{e^{p_k^\diamond}\|{\bf v}(0)\|}{\|{\bf u}(T)\|} P\ket{{\bf v}_D(T)}-\frac{e^{p_k^\diamond}\|{\bf v}(0)\|}{\|{\bf u}(T)\|} P\ket{{\bf v}(T)}\right\|+ \left\| \frac{e^{p_k^\diamond}\|{\bf v}(0)\|}{\|{\bf u}(T)\|} P\ket{{\bf v}(T)}-\ket{{\bf u}(T)}\ket{p_k^\diamond}  \right\|\\
			= & O\left(\frac{e^{p_k^\diamond}\|{\bf u}(0)\|}{\|{\bf u}(T)\|} \varepsilon' +\frac{\pi L_p}{N_p}+e^{-\pi L_p} \right).
		\end{aligned}
	\end{equation*}

	To ensure the precision of the output state, the following conditions are imposed:
	\( L_p = O(\log(1/\varepsilon)) \),
	\( N_p = (L_p / \varepsilon) = \tilde{O}(1/\varepsilon) \),
	\( p_k^{\diamond} = O(1) \),
	\( \varepsilon' = O(\|{\bf u}(T)\|\varepsilon / \|{\bf u}(0)\|) \).
	These conditions are essential for maintaining the desired accuracy in the quantum simulation process.
	The probability of obtaining the desired state is given by \( O(\|{\bf u}(T)\|^2 / \|{\bf u}(0)\|^2) \). Consequently, the number of measurements required to achieve the desired state with high probability is \( O(\|{\bf u}(0)\|^2 / \|{\bf u}(T)\|^2) \). This ensures that the output state is accurately retrieved with the specified precision.

	Combining all the steps, the Schr\"{o}dingerisation method requires at most
	$
	\tilde{O}\left(\frac{T^2 \|{\bf u}(0)\|^3}{h^4\varepsilon^3 \|{\bf u}(T)\|^3}\right)
	$
	single-qubit gates and CNOT gates.
	\end{proof}
		{	We now extend the above discussion to the operators
			acting on a $d$-dimensional domain $\Omega$  for special cases $\rho_{mn}=0, m\neq n$.}
		\begin{lemma}
			
			For  the \( d \)-dimensional Schrödinger equation
			$
			\frac{d|U(t)\rangle}{dt} = i{\bf H}_{BS}|U(t)\rangle,
			$
			the Hamiltonian \( {\bf H}_{BS} \) is specified by Equation \eqref{dhbs}. The time evolution operator \( {\bf U}_{BS}(\tau) = \exp(i{\bf H}_{BS}\tau) \) with a time increment \( \tau \) is approximated by the unitary operator \( {\bf V}_{BS}(\tau) \) as detailed in Equation \eqref{dubs}. The explicit quantum circuit realization of \( {\bf V}_{BS}(\tau) \) is illustrated in Figures \ref{Vd11}--\ref{VBSd}. Additionally, the upper bound on the approximation error, assessed in the operator norm, is given by

			\begin{equation}\label{err_d}
				\begin{aligned}
					\|{\bf U}_{BS}(\tau)-{\bf V}_{BS}(\tau)\|
					\leq  \frac{1}{4} \tau^2 n_x (N_p \gamma_1^2+2 \gamma_2^2+2N_p \gamma_1\gamma_2) \sum_{m=1}^d c_{m}^2,
				\end{aligned}
			\end{equation}
			where $d$ denotes the spatial dimension, $N_p = 2^{n_p}$ and $N_x-1 = 2^{n_x}$ represent the number of grid
			points for the variables $p$ and $x$, respectively. Additionally, $\gamma_1, \gamma_2$ are defined in \eqref{h123} and $c_{m}=\displaystyle\max_{m}\left(r-\frac{\sigma_m^2}{2}, \frac{\sigma_m^2 \rho_{mm}}{2}\right).$
		\end{lemma}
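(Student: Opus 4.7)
The plan is to mirror the proof of Lemma 4.1, exploiting the tensor-product structure that reduces all cross-dimensional commutators to zero so that only within-dimension commutators contribute.

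First I would split the total approximation error by the triangle inequality into $\|{\bf U}_{BS}(\tau)-{\bf U}_*(\tau)\|+\|{\bf U}_*(\tau)-{\bf V}_{BS}(\tau)\|$, matching the two Trotter steps used to pass from \eqref{dhbs} to \eqref{dvbs}. The first term measures the cost of separating $\sum_m (r-\sigma_m^2/2)(H_2)_m\otimes I^{\otimes n_p}$ from $\sum_{k,m}(k-N_p/2)(\sigma_m^2\rho_{mm}/2)(H_1)_m\otimes|k\rangle\langle k|$. The key observation is that $(H_1)_m$ and $(H_1)_{m'}$ (and similarly with $H_2$, or mixed) commute whenever $m\ne m'$, since they act on disjoint copies of the $n_x$-qubit register $I^{\otimes(m-1)n_x}\otimes\cdot\otimes I^{\otimes(d-m)n_x}$. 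Consequently the commutator-scaling Trotter bound collapses to a sum over $m$ of $\|[(H_1)_m,(H_2)_m]\|$, each of which is controlled by the 1D estimate used in Lemma 4.1, scaled by $\tfrac12 N_p\gamma_1\gamma_2 n_x(r-\sigma_m^2/2)(\sigma_m^2\rho_{mm}/2)\le\tfrac12 N_p\gamma_1\gamma_2 n_x c_m^2$.

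Second, for $\|{\bf U}_*(\tau)-{\bf V}_{BS}(\tau)\|$ I would apply the same telescoping argument used in 1D: bound it by $\|\tilde{\bf V}_2(\tau)-\prod_m U_2((r-\sigma_m^2/2)\tau)_m\|+\max_k|k-N_p/2|\cdot\|\tilde{\bf V}_1(\tau)-\prod_m U_1(\sigma_m^2\rho_{mm}/2\,\tau)_m\|$. Because each factor $U_i(\cdot)_m=I^{\otimes(m-1)n_x}\otimes U_i(\cdot)\otimes I^{\otimes(d-m)n_x}$ acts on an independent register, the single-dimension Trotter error bounds $\|U_1(\tau)-V_1(\tau)\|\le\tfrac{\gamma_1^2\tau^2(n_x-1)}{2}$ and $\|U_2(\tau)-V_2(\tau)\|\le\tfrac{\gamma_2^2\tau^2(n_x-1)}{2}$ from Lemma 4.1 telescope across $m=1,\dots,d$. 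Using the sub-multiplicativity of the operator norm together with the fact that each single factor is unitary, the $d$-fold product error is at most the sum of the $d$ per-factor errors, each now scaled by the appropriate constant $c_m^2$, yielding $\sum_{m=1}^d(\tfrac{\gamma_2^2\tau^2 n_x}{2}c_m^2+\tfrac{N_p}{2}\tfrac{\gamma_1^2\tau^2 n_x}{2}c_m^2)$.

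Adding the two contributions and absorbing constants gives exactly
\begin{equation*}
\|{\bf U}_{BS}(\tau)-{\bf V}_{BS}(\tau)\|\le \tfrac{1}{4}\tau^2 n_x(N_p\gamma_1^2+2\gamma_2^2+2N_p\gamma_1\gamma_2)\sum_{m=1}^d c_m^2,
\end{equation*}
with $c_m=\max(r-\sigma_m^2/2,\sigma_m^2\rho_{mm}/2)$. The main obstacle I anticipate is bookkeeping: I must carefully verify the commutator-vanishing claim for operators acting on distinct registers and confirm that the telescoping of unitary products across dimensions preserves the constants exactly as in the 1D case, so that the $\sum_{m=1}^d c_m^2$ dependence (rather than, say, $d\max_m c_m^2$ or a quadratic-in-$d$ factor from cross terms) is indeed sharp.
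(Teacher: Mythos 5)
Your proposal is correct and follows essentially the same route as the paper's proof: a triangle-inequality split into the first-stage Trotter error $\|{\bf U}_{BS}-{\bf U}_*\|$ (controlled by within-dimension commutators, since operators on disjoint $n_x$-qubit registers commute) plus the second-stage error $\|{\bf U}_*-{\bf V}_{BS}\|$ (controlled by telescoping the $d$-fold unitary products and invoking the one-dimensional bounds of Lemma \ref{lem}), with the same constants. Your explicit justification of why the cross-dimensional commutators vanish is a point the paper leaves implicit, but it is the same argument.
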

		
		\begin{proof}
			Similar to	the approximation error in  one-dimension  concluded in Lemma \ref{lem},
			for  $d$-dimension case, we have
			\begin{equation*}
				\begin{aligned}
					\left \| \displaystyle\prod_{m=1}^{d} \left({ U}_2\left( \left(r-\frac{\sigma^2_{m}}{2}\right)\tau  \right)_m -\tilde{\bf V}_2 (\tau)_m\right)\right \|
					\leq& \sum_{m=1}^{d} \left\|\left({ U}_2\left(\left(r-\frac{\sigma^2_{m}}{2}\right)\tau  \right)_m -\tilde{\bf V}_2 (\tau)_m\right)\right\|\\
					\leq& \frac{\gamma_2^2\tau^2 (n_x-1)}{2} \sum_{m=1}^{d} \left(r-\frac{\sigma^2_{m}}{2}\right),
				\end{aligned}
			\end{equation*}
			
			\begin{equation*}
				\begin{aligned}
					\left \| \displaystyle\prod_{m,n=1}^{d} \left({U}_1\left( \frac{\sigma_m^2 \rho_{mm}}{4}\tau\right)_{m} -\tilde{\bf V}_1 (\tau)_{m}\right)\right \|
					\leq& \sum_{m=1}^{d} \left\|\left({ U}_1\left( \frac{\sigma_m^2  \rho_{mm}}{4}\tau\right)_{m} -\tilde{\bf V}_1 (\tau)_{m} \right)\right\|\\
					\leq& \frac{\gamma_1^2\tau^2 (n_x-1)}{2} \sum_{m=1}^{d} \frac{\sigma_m^2 \rho_{mm}}{4}.
				\end{aligned}
			\end{equation*}
			
			The Trotter error of the first decomposition Equation \eqref{dubs} is
			\begin{equation*}
				\begin{aligned}
					&\|{\bf U}_{BS}(\tau)-{\bf U}_*(\tau)\|\\
					\leq& \frac{1}{2}\left\|\left[ \displaystyle \sum_{k=0}^{N_p-1}\left(k-\frac{N_p}{2}\right)\sum_{m=1}^{d}  ({ H}_1)_{m} \otimes|k\rangle \langle k|, \displaystyle\sum_{m=1}^{d} ({ H}_2)_m \otimes I^{\otimes n_p}\right]\right\|\\
					\leq & \frac{1}{2} N_p \gamma_1 \gamma_2 2n_x \sum_{m=1}^{d} \left(r-\frac{\sigma^2_m}{2}\right)\frac{\sigma_m ^2 \rho_{mm}}{2}.
				\end{aligned}									
			\end{equation*}
			
			The error between ${\bf U}_*$ and ${\bf V}_{BS}$ is
			\begin{equation*}
				\begin{aligned}
					&\|{\bf U}_*(\tau)-{\bf V}_{BS}(\tau)\|\\
					\leq& \left\|\displaystyle\prod_{m=1}^d ({ U}_2)_m-\tilde{\bf V}_2 \right\| 
					+\displaystyle\max_{0\leq k\leq N_p-1}\left| k-\frac{N_p}{2}\right| \left\| 
					\displaystyle\prod_{m=1}^d  ({ U}_1)_{m}-\tilde{\bf V}_1\right\|\\
					\leq & \frac{\gamma_2^2\tau^2 2n_x}{2} \sum_{m=1}^{d} \left(r-\frac{\sigma^2_{m}}{2}\right)^2 
					+\frac{N_p}{2} \frac{\gamma_1^2\tau^2 2n_x}{2} \sum_{m=1}^{d} \left(\frac{\sigma_m^2 \rho_{mm}}{2}\right)^2.
				\end{aligned}
			\end{equation*}
			To sum up, we have
			\begin{equation*}
				\begin{aligned}
					\|{\bf U}_{BS}(\tau)-{\bf V}_{BS}(\tau)\|\leq &\|{\bf U}_{BS}(\tau)-{\bf U}_*(\tau)\|+\|{\bf U}_*(\tau)-{\bf V}_{BS}(\tau)\|\\
					\leq & \frac{1}{4} \tau^2 n_x (N_p \gamma_1^2+2 \gamma_2^2+2N_p \gamma_1\gamma_2) \sum_{m=1}^d c_{m}^2,
				\end{aligned}
			\end{equation*}
			where $c_{m}=\displaystyle\max_{m}\left(r-\frac{\sigma_m^2}{2}, \frac{\sigma_m^2 \rho_{m}}{2}\right).$
		\end{proof}

		\begin{lemma}\label{gate_d}
			The approximated time evolution operator ${\bf V}_{BS}(\tau )$ in \eqref{dvbs} can be implemented
			using ${\cal Q}_{single}=O(d N_pn_x)$ single-qubit gates and at most ${\cal Q}_{V_{BS}} = O(d N_pn_x^2)$ {\rm CNOT} gates for
			$n_x \geq 3$.
		\end{lemma}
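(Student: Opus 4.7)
The plan is to extend the gate-counting argument from Lemma \ref{gate} dimension by dimension. First, I would observe that by the explicit construction in \eqref{dvbs} and Figures \ref{Vd11}--\ref{Vd3}, the operators $\tilde{\bf V}_1(\tau)$ and $\tilde{\bf V}_2(\tau)$ each factor as a product of $d$ single-dimension unitaries $V_1\bigl(\tfrac{\sigma_m^2\rho_{mm}}{2}\tau\bigr)_m$ and $V_2\bigl((r-\tfrac{\sigma_m^2}{2})\tau\bigr)_m$ respectively, where each factor acts non-trivially on the $n_x$ qubits of register $q^m$ and as identity on the remaining $(d-1)n_x$ qubits. Consequently, each factor can be compiled using exactly the same sub-circuits $W_j(\gamma \tau,\lambda)$, multi-controlled $R_x/R_y$ gates, and phase gates that were analyzed in Lemma \ref{gate}, so the per-factor cost is $O(n_x)$ single-qubit gates and $O(n_x^2)$ CNOT gates, with the concrete bound $16 n_x^2 - 4 n_x - 14$ for the CNOT count still applying inside each block.

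Next, I would sum over the $d$ dimensions. Because the $d$ factors in $\tilde{\bf V}_1(\tau)$ (respectively $\tilde{\bf V}_2(\tau)$) act on disjoint registers and are applied sequentially, their gate costs add, giving $O(d n_x)$ single-qubit gates and $O(d n_x^2)$ CNOT gates for each of $\tilde{\bf V}_1(\tau)$ and $\tilde{\bf V}_2(\tau)$. For the controlled variant appearing in the binary-expansion implementation of $V_{BS}(\tau)$, the same argument as in Lemma \ref{gate} shows that promoting each $W_j$ and the multi-controlled rotation to a controlled version increases the CNOT count by a constant multiplicative factor, so controlled-$\tilde{\bf V}_1(\tau)$ still costs $O(d n_x)$ single-qubit gates and $O(d n_x^2)$ CNOTs.

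Finally, I would assemble the full circuit using the binary decomposition
\[
{\bf V}_{BS}(\tau) = \bigl(\tilde{\bf V}_2(\tau)\,\tilde{\bf V}_1(\tau)^{-N_p/2} \otimes I^{\otimes n_p}\bigr)\,
{\prod}'_{m=0}^{n_p-1}\!\left(\tilde{\bf V}_1(\tau)^{2^m}\otimes |1\rangle\langle 1| + I^{\otimes dn_x}\otimes |0\rangle\langle 0|\right),
\]
which, exactly as in the one-dimensional case, uses at most $2^{n_p-1}$ uncontrolled applications of $\tilde{\bf V}_1(\tau)$ and $\sum_{m=0}^{n_p-1} 2^m = N_p - 1$ controlled applications, together with a single $\tilde{\bf V}_2(\tau)$. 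Multiplying the per-block cost by this $O(N_p)$ count yields the total
\[
\mathcal{Q}_{\text{single}} = O(d N_p n_x),\qquad
\mathcal{Q}_{{\bf V}_{BS}} = O(d N_p n_x^2),
\]
as claimed, valid for $n_x \geq 3$ so that the decomposition of each multi-controlled rotation into $(16 j - 40)$ CNOTs is nontrivial. The only mild subtlety I foresee is bookkeeping the constant factors between controlled and uncontrolled $W_j$ blocks across the $d$ disjoint registers to confirm they do not interact with the $O(d)$ factor, but this is purely mechanical and does not change the scaling.
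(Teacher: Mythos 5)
Your proposal is correct and follows essentially the same route as the paper: both decompose $\tilde{\bf V}_1$ and $\tilde{\bf V}_2$ into $d$ single-dimension blocks on disjoint registers (so the Lemma~\ref{gate} counts simply acquire a factor of $d$), and then multiply by the one $\tilde{\bf V}_2$, at most $2^{n_p-1}$ uncontrolled $\tilde{\bf V}_1$, and $2^{n_p}-1$ controlled $\tilde{\bf V}_1$ applications arising from the binary expansion. No substantive differences.
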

		
		\begin{proof}
			Based on the same discussion in the proof of
			Lemma \ref{gate}, the single-qubit gates for ${\bf V}_1$ and ${\bf V}_2$ can be directly implemented  for $d$ dimension. 
			\begin{equation*}
				\begin{aligned}
					{\cal Q}_{single-d}=&{\cal Q}_{single-V_1}+{\cal Q}_{single-V_2}\\
					=&d 2^{n_p-1}  (2+2n_x+2n_x)
					+d(4n_x)
					=O(d N_p n_x).
				\end{aligned}
			\end{equation*}
			Similarly, for CNOT gate, we have 
			\begin{equation*}
				\begin{aligned}
					{\cal Q}_{\bf{BS}}=d{\cal Q}_{V_2} +d 2^{n_p-1} {\cal Q}_{V_1}+d(2^{n_p}-1){\cal Q_{c-V_1}}
					=O(d N_p n_x^2).
				\end{aligned}
			\end{equation*}
		\end{proof}

		\begin{lemma}\label{lem2}
			Let \( {\bf H}_{BS} \) be the Hamiltonian defined in Equation \eqref{dhbs}. The time evolution operator \( {\bf U}_{BS}(T) = \exp(i{\bf H}_{BS}T) \) for a time duration \( T \) can be implemented on a \( (dn_x + n_p) \)-qubit system using quantum circuits with
			$
			O\left(\frac{1}{\varepsilon}d N_p T^2 n_x^2 (N_p \gamma_1^2 + 2 \gamma_2^2 + 2N_p \gamma_1\gamma_2)\sum_{m=1}^d c_{m}^2\right)
			$
			single-qubit gates and
			$
			O\left(\frac{1}{\varepsilon} d N_p n_x^3 T^2 (N_p \gamma_1^2 + 2 \gamma_2^2 + 2N_p \gamma_1\gamma_2)\sum_{m=1}^d c_{m}^2\right)
			$
			CNOT gates, within an additive error of \( \varepsilon \).
		\end{lemma}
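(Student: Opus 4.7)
The plan is to mirror exactly the argument used in Lemma \ref{lem1}, substituting the $d$-dimensional per-step error bound \eqref{err_d} in place of \eqref{err} and the $d$-dimensional gate counts from Lemma \ref{gate_d} in place of those in Lemma \ref{gate}. The structure of the proof is therefore: (i) fix a Trotter step $\tau$ with $N = T/\tau$, (ii) use subadditivity of the error under iteration of unitaries to convert the one-step bound into an $N$-step bound, (iii) solve for the minimal $N$ compatible with total additive error $\varepsilon$, and (iv) multiply the per-step circuit cost from Lemma \ref{gate_d} by this $N$.

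Concretely, I would first invoke the telescoping inequality
\begin{equation*}
\|\mathbf{U}_{BS}(\tau)^N - \mathbf{V}_{BS}(\tau)^N\| \leq N\,\|\mathbf{U}_{BS}(\tau) - \mathbf{V}_{BS}(\tau)\|,
\end{equation*}
which holds because both operators are unitary with norm $1$. Inserting \eqref{err_d} and imposing $N\cdot\tfrac{1}{4}\tau^2 n_x(N_p\gamma_1^2 + 2\gamma_2^2 + 2N_p\gamma_1\gamma_2)\sum_{m=1}^d c_m^2 \leq \varepsilon$, together with $\tau = T/N$, produces the requirement
\begin{equation*}
N \geq \frac{T^2 n_x(N_p\gamma_1^2 + 2\gamma_2^2 + 2N_p\gamma_1\gamma_2)\sum_{m=1}^d c_m^2}{4\varepsilon}.
\end{equation*}

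Next, by Lemma \ref{gate_d}, a single application of $\mathbf{V}_{BS}(\tau)$ uses $O(dN_p n_x)$ single-qubit gates and $O(dN_p n_x^2)$ CNOT gates. Multiplying these per-step costs by the minimal $N$ above yields the advertised total gate counts in single-qubit gates and CNOT gates, completing the proof.

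There is no real obstacle here: the only subtle point is the unitary subadditivity of the global Trotter error, which is standard, and the bookkeeping must correctly reproduce the factor $\sum_{m=1}^d c_m^2$ together with the extra factor $d$ coming from the gate count (rather than from the error bound). I expect the main care to be taken in verifying that the \emph{per-step} error truly scales with $\sum_m c_m^2$ while the \emph{per-step} circuit cost truly scales linearly in $d$, so that the product matches the statement without double-counting; this is guaranteed by the fact that \eqref{err_d} and Lemma \ref{gate_d} were derived independently from the same decomposition \eqref{dvbs}.
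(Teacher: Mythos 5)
Your proposal is correct and follows essentially the same route as the paper's own proof: the telescoping bound $\|\mathbf{U}_{BS}(\tau)^N - \mathbf{V}_{BS}(\tau)^N\| \leq N\|\mathbf{U}_{BS}(\tau) - \mathbf{V}_{BS}(\tau)\|$ combined with the per-step error \eqref{err_d} to fix $N$, then multiplication by the per-step gate counts of Lemma \ref{gate_d}. Your remark that the factor $\sum_{m=1}^d c_m^2$ enters through the error bound while the extra factor $d$ enters through the gate count matches the paper's bookkeeping exactly.
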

		
		\begin{proof}
			To ensure that the error in the simulation over the total time \( T \) is bounded by \( \varepsilon \), it suffices to divide the total time \( T \) into \( N \) intervals, where \( N = T/\tau \), such that
			\begin{equation}\label{errd}
				\|{\bf U}_{BS}^N(\tau) - {\bf V}_{BS}^N(\tau)\| \leq N \|{\bf U}_{BS}(\tau) - {\bf V}_{BS}(\tau)\| \leq \varepsilon.
			\end{equation}
			From Equation \eqref{err_d}, we have
			\begin{equation*}
				\frac{1}{4N} T^2 n_x (N_p \gamma_1^2 + 2 \gamma_2^2 + 2N_p \gamma_1\gamma_2) \sum_{m=1}^d c_{m}^2 \leq \varepsilon,
			\end{equation*}
			which can be rearranged to yield
			\begin{equation*}
				N \geq \frac{1}{4\varepsilon} T^2 n_x (N_p \gamma_1^2 + 2 \gamma_2^2 + 2N_p \gamma_1\gamma_2) \sum_{m=1}^d c_{m}^2.
			\end{equation*}
			By Lemma \ref{gate_d}, the unitary \( {\bf V}_{BS} \) can be implemented using \( O(d N_p n_x) \) single-qubit gates and at most \( O(d N_p n_x^2) \) CNOT gates.
			
			Thus, the implementation of \( {\bf V}_{BS}^N \) requires
			$
			O\left(\frac{1}{\varepsilon}d N_p T^2 n_x^2 (N_p \gamma_1^2 + 2 \gamma_2^2 + 2N_p \gamma_1\gamma_2)\sum_{m=1}^d c_{m}^2\right)
			$
			single-qubit gates and
			$
			O\left(\frac{1}{\varepsilon} d N_p n_x^3 T^2 (N_p \gamma_1^2 + 2 \gamma_2^2 + 2N_p \gamma_1\gamma_2)\sum_{m=1}^d c_{m}^2\right)
			$
			CNOT gates.
		\end{proof}

	\begin{theorem}
		For the d-dimensional Black-Scholes equation  \eqref{bsd}, the state $|{\bf U}(t)\rangle$, where ${\bf U(t)}$ is the solution of
		Equation \eqref{eq:d-dim} with a mesh size $h$, can be prepared up to precision $\varepsilon'$ using the Schr\"{o}dingerisation
		method depicted in Figure \ref{VBSd} and Figure \ref{schro}. This preparation can be achieved using at most
		$\tilde{O}\left(\frac{d T^2 \|{\bf U}(0)\|^3 \sum_{m=1}^d c_{m}^2}{h^4\varepsilon^3 \|{\bf U}(T)\|^3}\right)$
		single-qubit gates and {\rm CNOT} gates.
	\end{theorem}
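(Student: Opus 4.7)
The plan is to mirror the argument used in Theorem \ref{th1} for the one-dimensional case, now invoking the $d$-dimensional analogues established in Lemma \ref{gate_d} and Lemma \ref{lem2}. The quantum circuit in Figure \ref{schro} (with $dn_x + n_p$ qubits) decomposes into four ingredients: the quantum Fourier transform on the $p$-register, its inverse, the simulated time evolution $V_{BS}^N(\tau)$, and the final projection $P$ onto $|p_k^{\diamond}\rangle$. I would account for each of these in turn and then combine them with the choice of discretization parameters that guarantees the claimed precision.

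First I would note that the QFT and IQFT on $n_p$ qubits contribute only $O(n_p^2)$ CNOT gates, which is logarithmic in $1/\varepsilon$ and thus absorbed by the $\tilde{O}$ notation. Next, invoking Lemma \ref{lem2} with $n_x = O(\log(L_x/h))$, $\gamma_1 = 1/(h^2 L_p)$, and $\gamma_2 = 1/(2h)$, the dominant cost of $V_{BS}^N(\tau)$ to precision $\varepsilon'$ is of order $\tilde{O}\bigl(d N_p^2 \sum_m c_m^2 / (h^4 L_p^2 \varepsilon')\bigr)$ CNOT gates and a similar count of single-qubit gates, again up to polylogarithmic factors in $1/h$ and $L_x$.

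Third, I would carry out the measurement and projection analysis exactly as in the 1D case. Writing $|\hat{\bf v}(T)\rangle = {\bf U}_{BS}^N(T)|\hat{\bf v}(0)\rangle$ and $|\hat{\bf v}_D(T)\rangle = {\bf V}_{BS}^N(T)|\hat{\bf v}(0)\rangle$, applying the IQFT and projecting onto $|p_k^{\diamond}\rangle$, the triangle inequality gives a total output error of the form
\begin{equation*}
\left\| \frac{e^{p_k^{\diamond}}\|{\bf v}(0)\|}{\|{\bf U}(T)\|} P |{\bf v}_D(T)\rangle - |{\bf U}(T)\rangle|p_k^{\diamond}\rangle \right\| = O\!\left( \frac{e^{p_k^{\diamond}}\|{\bf U}(0)\|}{\|{\bf U}(T)\|}\varepsilon' + \frac{\pi L_p}{N_p} + e^{-\pi L_p} \right).
\end{equation*}
Choosing $L_p = O(\log(1/\varepsilon))$, $N_p = \tilde{O}(1/\varepsilon)$, $p_k^{\diamond} = O(1)$, and $\varepsilon' = O(\|{\bf U}(T)\|\varepsilon/\|{\bf U}(0)\|)$ controls each term at the $\varepsilon$-level, while the success probability of the projection is $O(\|{\bf U}(T)\|^2/\|{\bf U}(0)\|^2)$, forcing $O(\|{\bf U}(0)\|^2/\|{\bf U}(T)\|^2)$ repetitions to boost success to constant.

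Finally I would multiply the per-run circuit cost by the number of repetitions. Substituting $N_p = \tilde{O}(1/\varepsilon)$ and $\varepsilon' = O(\|{\bf U}(T)\|\varepsilon/\|{\bf U}(0)\|)$ into the gate count from Lemma \ref{lem2} yields a per-run cost that scales as $\tilde{O}\bigl(d T^2 \sum_m c_m^2 \|{\bf U}(0)\| / (h^4 \varepsilon^3 \|{\bf U}(T)\|)\bigr)$ (the three inverse powers of $\varepsilon$ come from $N_p^2$ and from $1/\varepsilon'$); multiplying by the $O(\|{\bf U}(0)\|^2/\|{\bf U}(T)\|^2)$ measurement overhead produces the claimed $\tilde{O}\bigl(d T^2 \|{\bf U}(0)\|^3 \sum_m c_m^2 / (h^4 \varepsilon^3 \|{\bf U}(T)\|^3)\bigr)$ bound for both single-qubit and CNOT gate counts. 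The main obstacle, as in the 1D case, is bookkeeping: making sure that the balance between $L_p$, $N_p$, and $\varepsilon'$ is tight enough that no hidden polynomial factor of $1/\varepsilon$ leaks into the final bound beyond the three already allowed, and that the factor of $d$ enters only linearly rather than exponentially, which is precisely what the $\rho_{mn}=0$ simplification and Lemma \ref{gate_d} together guarantee.
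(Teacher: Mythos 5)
Your proposal is correct and follows essentially the same route as the paper: it invokes Lemma \ref{lem2} with $n_x = O(\log(L_x/h))$, $\gamma_1 = 1/(h^2 L_p)$, $\gamma_2 = 1/(2h)$, reuses the measurement and output-error analysis of Theorem \ref{th1} with the same parameter choices for $L_p$, $N_p$, $p_k^{\diamond}$, and $\varepsilon'$, and multiplies by the $O(\|{\bf U}(0)\|^2/\|{\bf U}(T)\|^2)$ repetition overhead. The paper's proof is merely terser, deferring the third step to Theorem \ref{th1} by reference, whereas you spell it out explicitly; the substance and the final bookkeeping are identical.
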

	
	\begin{proof}
		As discussed in Theorem~\ref{th1}, the computational complexity of the Schrödingerisation approach for the 
		d-dimensional Black-Scholes equation with central difference discretization consists of three key components.
		
		The complexity analysis of simulating the unitary ${\bf V}_{BS}^N(\tau)$
		as concluded in Lemma \ref{lem2} can be
		applied with $n_x=O(\log(L_x/h))$ and $\gamma_1=\frac{1}{h^2 L_p}, \gamma_2=\frac{1}{2h}$.
		This results in a computational
		cost of  $O\left(\frac{dN_p^2 T^2 \log^3(L_x/h)\sum_{m=1}^d c_{m}^2}{h^4 L_p^2 \varepsilon'}\right)$,
		with $\varepsilon'=\|{\bf U}(T)\| \varepsilon /\|{\bf U}(0)\|$.
		Following Theorem~\ref{th1}, the required number of measurements and the error analysis of the output state preserve the same scaling.
		Then	combining these components, we obtain the total complexity,
		$
		\tilde{O}\left(\frac{d T^2 |{\bf U}(0)|^3 \sum_{m=1}^d c_{m}^2}{h^4\varepsilon^3 |{\bf U}(T)|^3}\right).
		$

	\end{proof}
		
		\begin{remark}
			Introducing the inter-dimensional coupling (correlation coefficients \textit{$\rho_{mn}\neq 0,$ $m\neq n$}) requires additional quantum gates to implement the non-diagonal operations. If cross terms are considered, it is necessary to simulate the non-diagonal operator $\partial x_i \partial x_j$  which may introduce $O(d^2)$ terms (similar to the high-dimensional coupling in the heat equation \cite{hu2024}). 
			
		\end{remark}

		\begin{remark}
			The Schr\"{o}dingerisation method provides a promising pathway for quantum simulation of the Black-Scholes equation, particularly in high-dimensional settings such as multi-asset option pricing.  Classical finite-difference methods suffer from the curse of dimensionality, requiring $( O(N_x^d) )$ for $d$ underlying assets. In contrast, the quantum approach scales as $\tilde{O}(d/\varepsilon^3)$, offering exponential memory savings and polynomial gate complexity in $d$. This advantage becomes pronounced for $d\geq 3$, where classical methods become computationally prohibitive.  
		\end{remark}
		\section{Numerical results}
		
		In this section, we use the proposed quantum circuit  by  the Qiskit package \cite{qiskit_book} and  the classical difference method to verify the algorithm in the framework of the  Black-Scholes model to ensure the feasibility of Schr\"{o}dingerisation algorithm. Throughout the implementation
		phase, the {\it Statevector} simulator is utilized
		to retrieve the complete solution, while the
		{\it Estimator} function is employed for the measurement of observable.
		\begin{example}
			The classic European call option price has an explicit expression at $t$ time:
			\begin{equation*}
				u(x,t)=e^x N(d_1)- K e^{-r t} N(d_2),
			\end{equation*}
			where
			\begin{equation*}
				d_1=\frac{x-\ln K+(r+\frac{1}{2}\sigma^2)t}{\sigma \sqrt{t}},\quad d_2=d_1-\sigma \sqrt{t},\quad 
				N(y)=\frac{1}{2\pi} \int_{-\infty}^{y} e^{-\frac{1}{2}s^2} ds.
			\end{equation*}
		\end{example}
		To formulate the problem,	parameters of  equation are given: $T=1$, rate $r=0.02$,
		volatility $\sigma=0.3$, exercise price $K=30$, 
		minimum asset price $s_{\min}=10^{-4}$, maximum asset price $s_{\max}=10K$, and the logarithm of  			$x\in L_x= [x_{\min}, x_{\max}]=[\ln(10^{-4}),\ln(10K)]$,
		auxiliary variable $p\in L_p=[p_{\min}, p_{\max}]=[-4\pi,4\pi]$.
		
		In this test, we choose  two forms of initial values to simulate the problem.
		The first one is shown in  \eqref{initial} which is continuous
		but not in $C^1$. 
		In order to increase the order of convergence of variables $p$,  we consider a   smoother initial solution discussed in 
		\cite{shima2024},
		\begin{align*}
			g(p)=\left\{
			\begin{array}{l}
				(-3+3e^{-1} )p^3+	(-5+4e^{-1} )p^2-p+1, \quad p\in(-1,0),\\
				e^{-p}, \quad p\in (-\infty,-1] \cup [0,\infty),
			\end{array}
			\right.\label{smooth}
		\end{align*}
		and initial value of \eqref{warp} is	${\bf v}(0):= \bar{\bf u}(0) \otimes {g(p)}$.
		The  numerical results of two different initial solutions are denoted as
		$u_h$ and $u_h^*$, respectively.
		
		Denote the error as $\epsilon = \vert u_{} - u_{numerical}\vert$, where $u_{numerical}$ denoted as $u_h$ and $u_h^*$. In finance and real life applications, people focus on the function value evolution when $S$ is in the neighborhood of the point $S = E$. Thus in the following representation of numerical results, we throw away the boring points and focus on the region that actually matters. In fact, we show the behavior of the function in the interval $S\in [0, 2E]$. We also show the error of the classical method and the Trotter splitting method when the same number of points are used to discretize the space dimension $x$. 
		Our numerical experiments demonstrate that employing smoother initial conditions along the $p$-dimension yields solutions with significantly improved agreement to the analytical solution.
		\begin{figure}[!htp]
			\centering
			\begin{minipage}{0.48\textwidth}
				\centering
				\includegraphics[height=5.7cm]{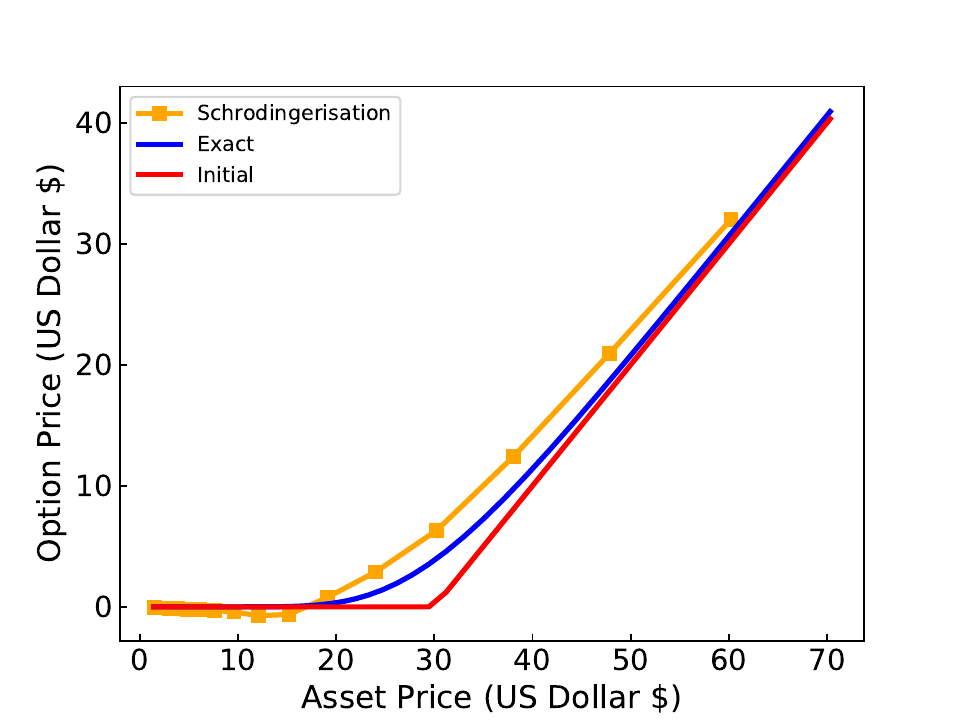}
				
			\end{minipage}\hfill
			\begin{minipage}{0.48\textwidth}
				\centering
				\includegraphics[height=5.7cm]{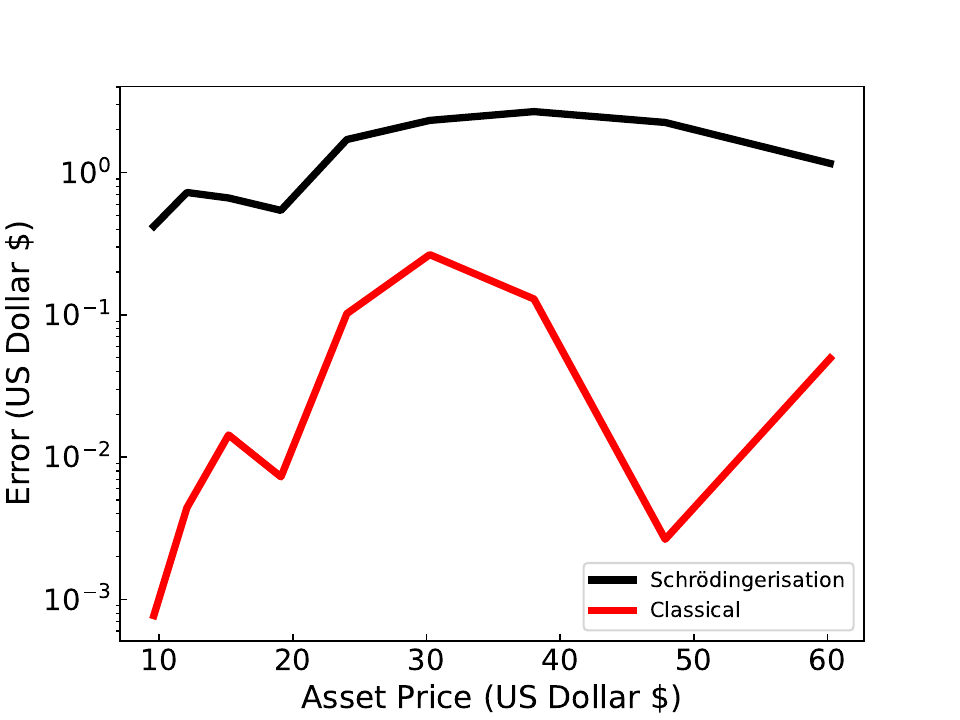}
			\end{minipage}
			\caption{\centering Figures of the numerical solution $u_h$ and error when $n_{x}=6,\ n_{p} = 4$.
				CPU runtime: 00:20:17. Maximum RAM occupancy: 5.47 GB} 
			\label{fig7}
		\end{figure}
		
		\begin{figure}[!htp]
			\centering
			\begin{minipage}{0.48\textwidth}
				\centering
				\includegraphics[height=5.7cm]{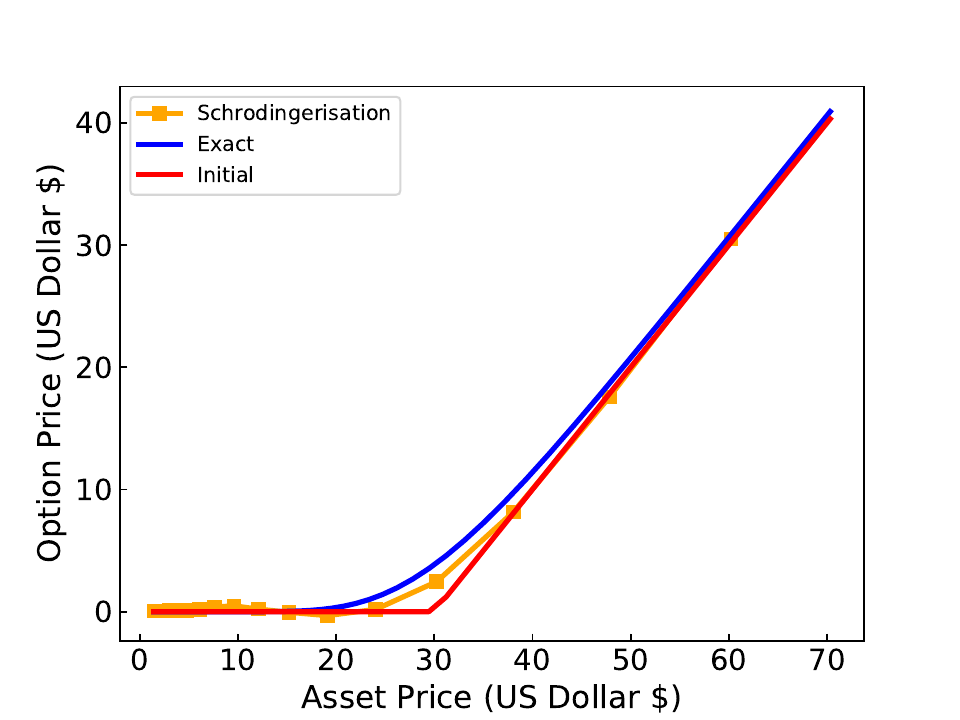}
				
			\end{minipage}\hfill
			\begin{minipage}{0.48\textwidth}
				\centering
				\includegraphics[height=5.7cm]{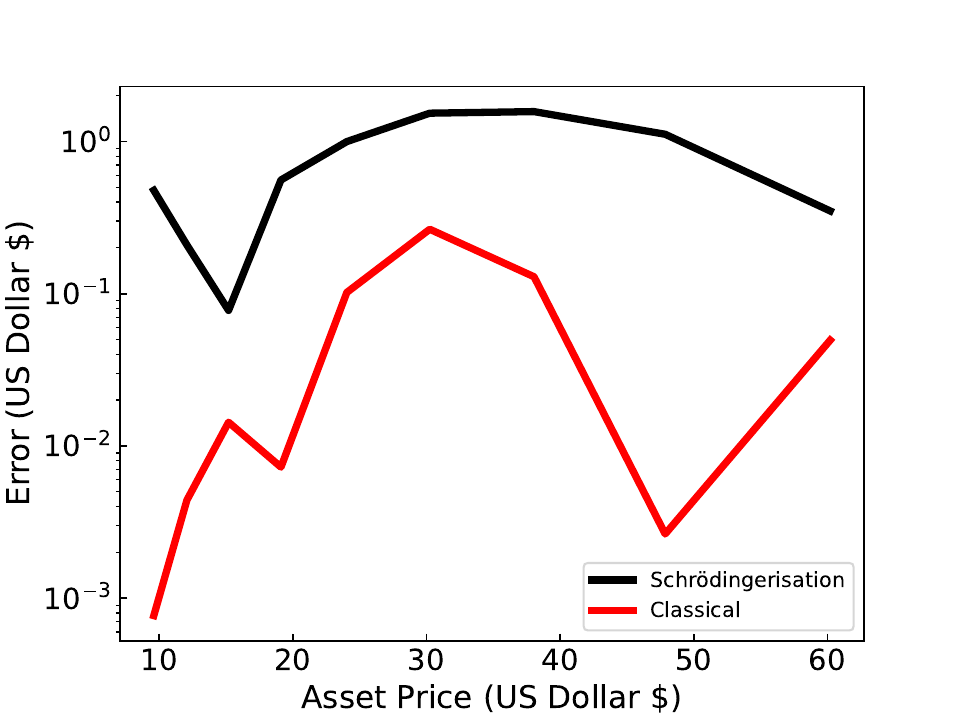}
			\end{minipage}
			
			\caption{\centering Figures of the numerical solution  $u_h$  and error when $n_{x}=6,\ n_{p} = 5$. 
				CPU runtime: 01:11:15. Maximum RAM occupancy: 6.09 GB} 
			\label{fig8}
		\end{figure}
		
		\begin{figure}[!htp]
			\centering
			\begin{minipage}{0.48\textwidth}
				\centering
				\includegraphics[height=5.7cm]{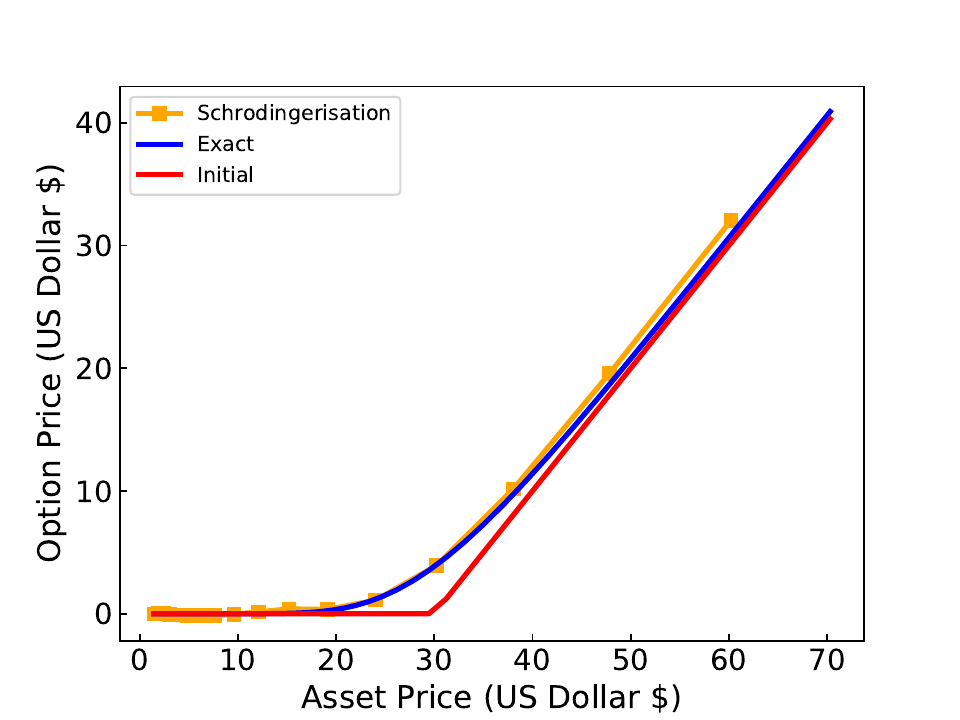}
				
			\end{minipage}\hfill
			\begin{minipage}{0.48\textwidth}
				\centering
				\includegraphics[height=5.7cm]{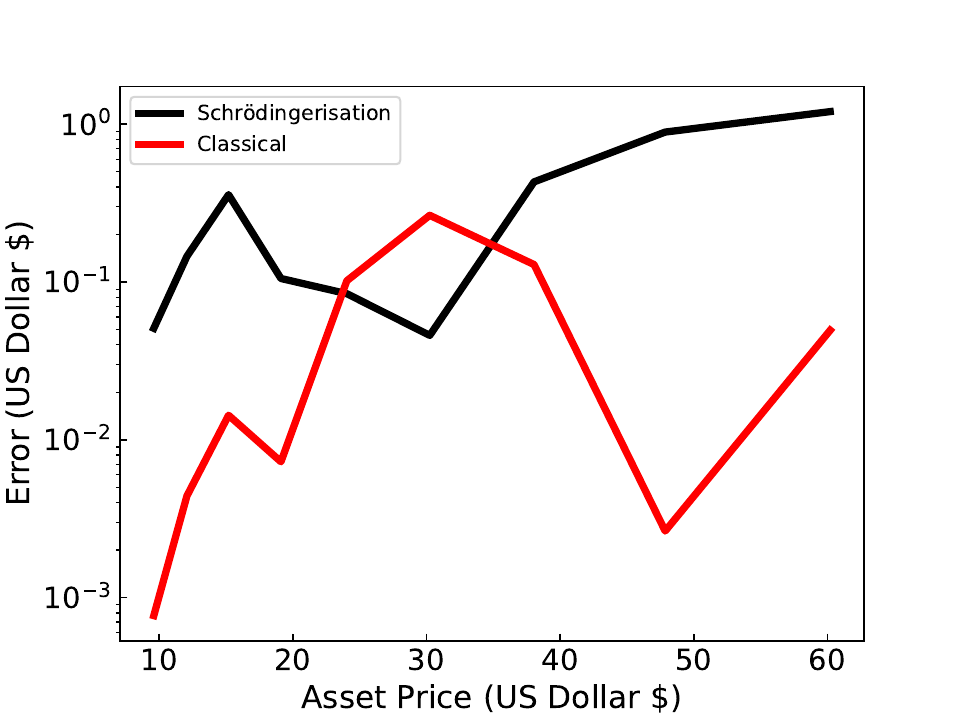}
			\end{minipage}
			
			\caption{\centering Figures of the numerical solution  $u_h$  and error when $n_{x}=6,\ n_{p} = 6$.
				CPU runtime: 03:37:11. Maximum RAM occupancy: 7.12 GB} 
			\label{fig9}
		\end{figure}
		\newpage
		
		\begin{figure}
			\centering
			\begin{minipage}{0.48\textwidth}
				\centering
				\includegraphics[height=5.7cm]{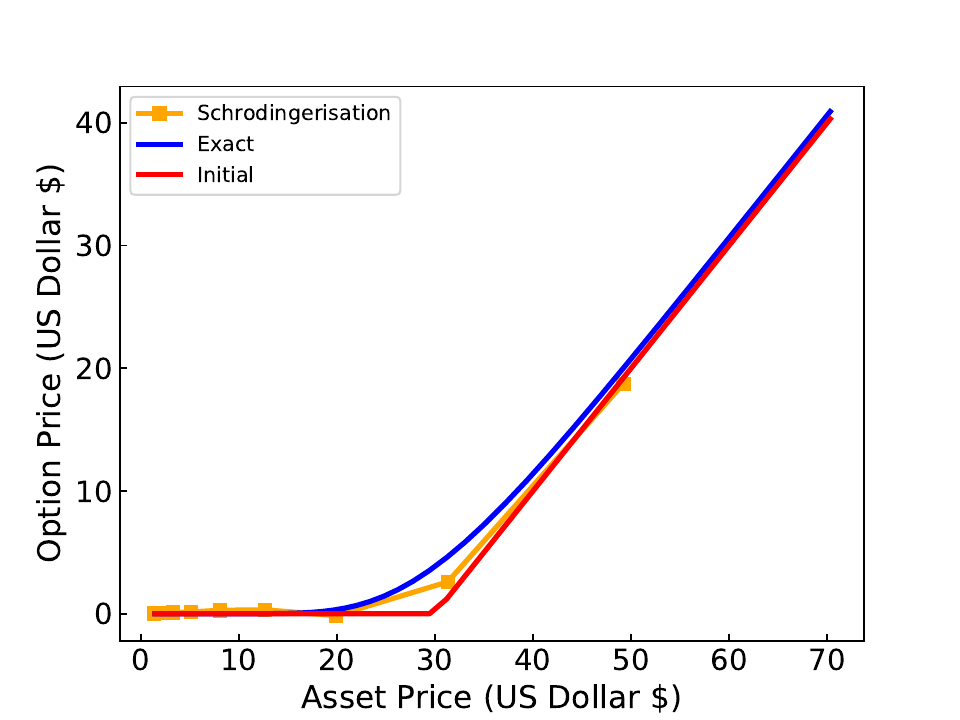}
				
			\end{minipage}\hfill
			\begin{minipage}{0.48\textwidth}
				\centering
				\includegraphics[height=5.7cm]{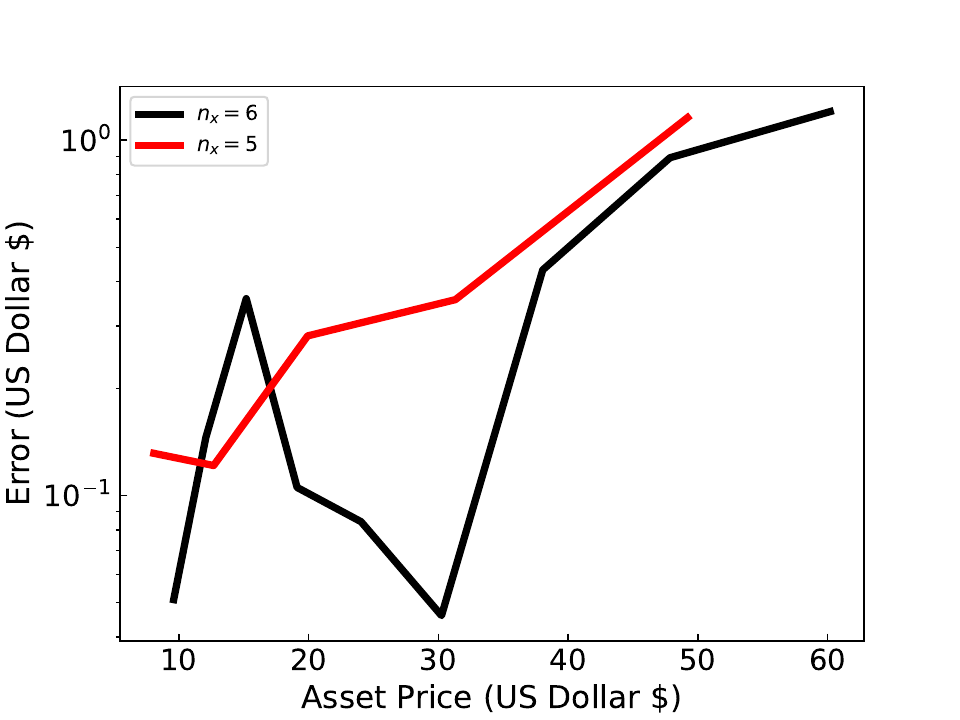}
			\end{minipage}
			
			\caption{\centering Figures of the numerical solution  $u_h$  and error when $n_{x}=5,6$ and $ n_{p} = 6$.
				CPU runtime: 00:18:08. Maximum RAM occupancy: 5.36 GB} 
			\label{fig10}
		\end{figure}
		
		\begin{figure}
			\centering
			\begin{minipage}{0.48\textwidth}
				\centering
				\includegraphics[height=5.7cm]{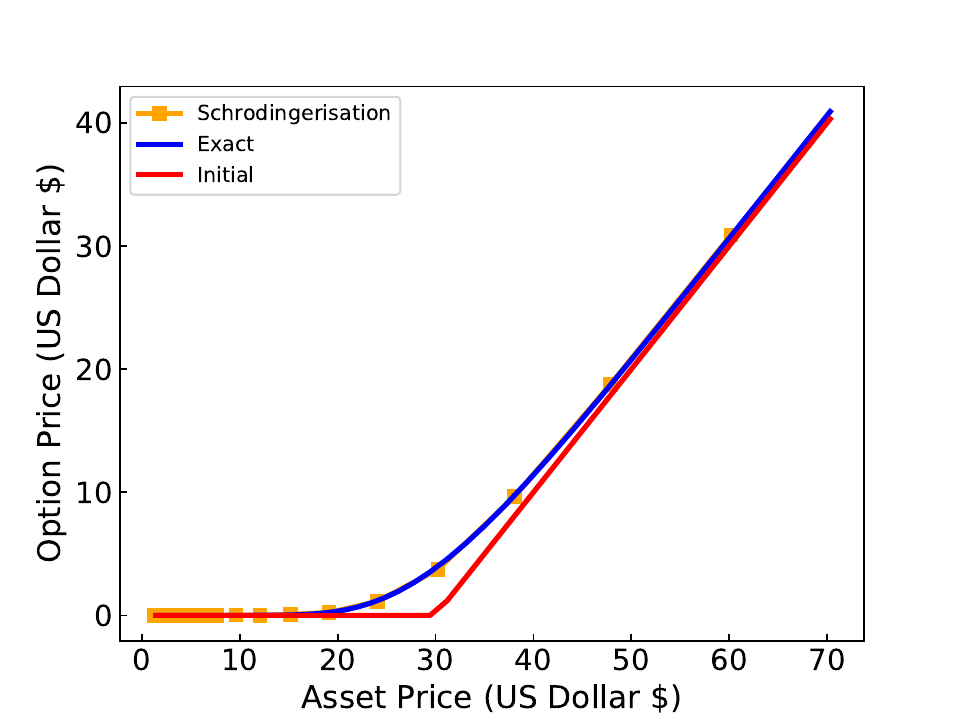}
				
			\end{minipage}\hfill
			\begin{minipage}{0.48\textwidth}
				\centering
				\includegraphics[height=5.7cm]{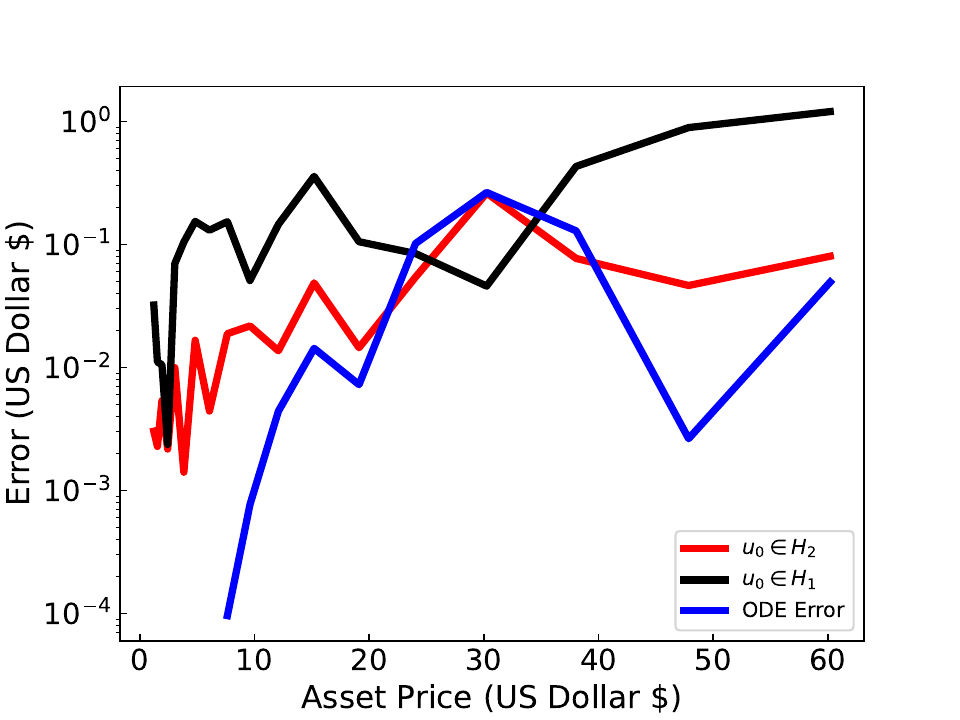}
			\end{minipage}
			
			\caption{\centering Figures of the numerical solution  $u_h^*$  and error  when $n_{x}=6,\ n_{p} = 6$.
			} 
			\label{fig11}
		\end{figure}

		To be clarified, the classical error in the figures comes from the directly solving of ODE \eqref{finite} with matrix exponential. The point number used to discretize the space dimension in the original PDE is the sames as the number set in Schrödingerisation procedure.  We use such classical error as a reference line and compare the right sub-figure in Figure \ref{fig7}, Figure \ref{fig8} and Figure \ref{fig9}, one can tell that as the number of ancilla qubits $n_{p}$ increases, the error produced by Schrödingerisation procedure steadily decrease. When $n_{p} = n_{x} = 6$, according to the analysis in \cite{jin2023}, the relative error $\epsilon = \frac{\Vert w_{h} - w\Vert}{\Vert w\Vert}$ generated by the discretization on p dimension satisfies : $\epsilon \sim\Delta p = \pi/8$. $w$ is the exact solution of ODE \eqref{finite}. Two pictures in Figure \ref{fig9} verified the theory.  Furthermore, comparing the left sub-figure in  Figure \ref{fig7}, Figure \ref{fig8} and Figure \ref{fig9}, one can also conclude that with the increase of $n_{p}$, numerical solution corresponding to the Schrödingerisation method converges to the exact solution. The convergence underscores the capability of the Schrödingerisation method to effectively approximate the solution. In real life applications, people are also concerned with where the inflection point of the function $V$ with respect to $S$ occurs at time $t$. For example, the asset price $S = E$, $E = 30$ in our example, is the inflection point when $t = 0$. Such point moves backward as $t$ increase. One can tell from the figures that  the Schrödingerisation method catches the inflection precisely. However due to the limitation of computing resources, the simulation with parameters $n_{x} = 6,\ n_{p} = 6$  is the maximum computational load that the computer can handle. Therefore we do the simulation when $n_{x} = 5,\ n_{p} = 6$ to compare with the best cases. Figure \ref{fig10}  shows that when keeping $n_{p}$ unchanged, more points on $x$ dimension leads to better accuracy. Figure \ref{fig11} shows that when we choose smoother initial condition ${\bf v}(0):= \bar{\bf u}(0) \otimes {g(p)}$ , the error will further decrease.
		
		In order to further illustrate the effectiveness of the Schr\"{o}dingerisation algorithm and the effect of initial value, the convergence rates of  two initial values with $\Delta t=10^{-3}, T=0.1$ are shown in Table \ref{rate:1d}. The results are obtained by using the central difference method for the Schr\"{o}dingerisation method.
		According to the Table, second-order convergence rates of smooth initial value  ${\bf v}(0):= \bar{\bf u}(0) \otimes {g(p)}$ can be obtained.

		
		
		\begin{table}
			\begin{center}
				\begin{tabular}{ c c c   c c c c}
					\hline
					$(\Delta p, \Delta x)$ & $(\frac{L_p}{2^7}, \frac{L_x}{2^6})$& order & $(\frac{L_p}{2^8}, \frac{L_x}{2^7})$& order & $(\frac{L_p}{2^9}, \frac{L_x}{2^8})$& order\\ 
					\hline
					$\|u_h-u \|_{L^2}$ &6.698e-01 & - & 3.271e-01 &1.03 &1.776e-01 &   0.89\\  
					\hline
					$\|u_h^*-u \|_{L^2}$&2.925e-01&-  & 7.540e-02 &1.96    &2.060e-02& 1.88\\ 
					\hline
				\end{tabular}
			\end{center}
			\caption{  The convergence rates of two initial solution $\|u_h-u \|_{L^2}$ and 	$\|u_h^*-u \|_{L^2}$, using the classical finite difference method and the Schr\"{o}dingerisation method.}
		\end{table}\label{rate:1d}
		
		\begin{example}
			Consider the two-dimensional Black-Scholes pricing model for cash or valueless options
			$$
			\left\{
			\begin{array}{l}
				\frac{\partial u}{\partial \tau }=\frac{1}{2}
				\sigma_1^2 \frac{\partial^2 u}{\partial{x}^2}
				+\frac{1}{2}
				\sigma_1^2 \frac{\partial^2 u}{\partial{y}^2}+
				\sigma_1 \sigma_2 \rho_{12} \frac{\partial^2 u}{\partial{x}\partial {y}}+ \left(r-\frac{\sigma_1^2}{2}\right)\frac{\partial u}{\partial x}+ \left(r-\frac{\sigma_2^2}{2}\right)\frac{\partial u}{\partial y} -ru,\\
				u(x,y,0)=u_0(x,y),
			\end{array}
			\right.
			$$
			where $u_0(x,y)=c$ with $x>\ln(K_1), y>\ln(K_2)$, otherwise $u_0(x,y)=0$.
		\end{example}
		Assume the amount paid by the right at maturity $c=1$,
		exercise price $K_1=K_2=50$, risk-free interest rate $r=0.03$, asset volatility $\sigma_1=\sigma_2=0.3$,
		asset correlation coefficient $\rho=0.6$,
		the logarithm of  asset price
		$x\in [x_{\min}, x_{\max}]=[\ln(10^{-6}),\ln(4K_1)]$,
		$y\in [y_{\min}, y_{\max}]=[\ln(10^{-6}),\ln(4K_2)]$,
		and	auxiliary variable $p\in L_p=[-\pi L,\pi L]$, $L=15$.
		Cash or valueless options satisfy the following boundary conditions
		\begin{equation*}
			\begin{aligned}
				u(x_{\min},y,\tau)&=u(x,y_{\min},\tau)=0,\quad \forall x,y,\tau,\\
				\frac{\partial u}{\partial x}(x_{\max},y,\tau)&=	\frac{\partial u}{\partial y}(x_{\max},y,\tau)=0,\quad
				\forall x,y,\tau. 
			\end{aligned}
		\end{equation*}
		and the exact solution is 
		\begin{eqnarray*}
			u(x,y,\tau)= c e^{-r\tau} B(d_x,d_y,\rho),
		\end{eqnarray*}
		where
		\begin{equation*}
			\begin{aligned}
				d_x&=\frac{x-\ln K_1+(r-\frac{1}{2}\sigma_1^2)\tau}{\sigma_1 \sqrt{\tau}},\\
				d_y&=\frac{x-\ln K_2+(r-\frac{1}{2}\sigma_2^2)\tau}{\sigma_2 \sqrt{\tau}},\\
				B(d_x,d_y,\rho)&=\frac{1}{2\pi \sqrt{1-\rho^2}} \int_{-\infty}^{d_x}\int_{-\infty}^{d_y} e^{-\frac{s_1^2-2\rho s_1 s_2+s_2^2}{2(1-\rho^2)}} ds_2 d s_1.
			\end{aligned}
		\end{equation*}
		The numerical results and  the images in the upper right corner are shown in the Figure \ref{fig:d} and \ref{fig:2d} using Matlab based on two initial values similar with  the one-dimensional  case.
		The results are obtained by using the central difference method for Schr\"{o}dingerisation with
		the parameters  	$T=1$, $\Delta t=10^{-3}$, $n_x=8$ and $n_p=7$.
		The numerical results show that the Schr\"{o}dingerisation method can simulate the price curve of dual-asset options well. 
		In this test, the Schr\"{o}dingerisation method can be directly applied to the Equation  \eqref{eq:d-dim} without dilation.
		Any value of $p$ that is slightly larger than $p^{\diamond}=\max(\lambda({\bf A}_1)T,0)$ can well retrieve the original solution.
		
		\begin{figure}[htbp]
			\centering
			\subfigure[$u_h$]{
				\begin{minipage}{5cm}
					\includegraphics[width=5cm]{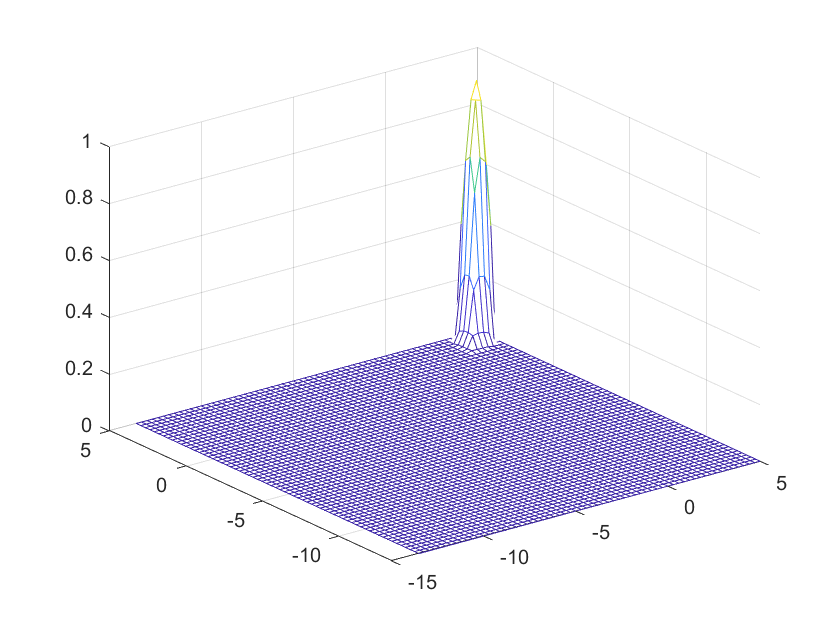}
				\end{minipage}
			}
			\subfigure[$u_h^*$]{
				\begin{minipage}{5cm}
					\includegraphics[width=5cm]{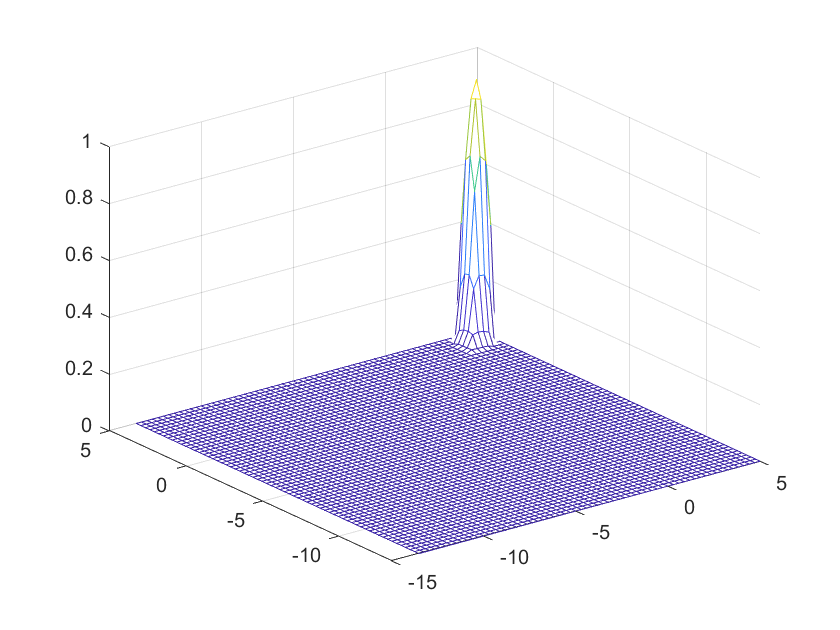}
				\end{minipage}
			}
			\caption{ The solutions of the 2d Black Sholes equation with two 
				initial values .}
			\label{fig:d}
		\end{figure}
		
		\begin{figure}[htbp]
			\centering
			\subfigure[$u_h$]{
				\begin{minipage}{5cm}
					\includegraphics[width=5cm]{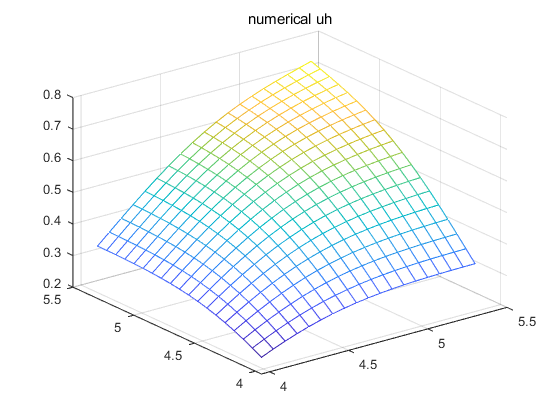}
				\end{minipage}
			}
			\subfigure[$u_h^*$]{
				\begin{minipage}{5cm}
					\includegraphics[width=5cm]{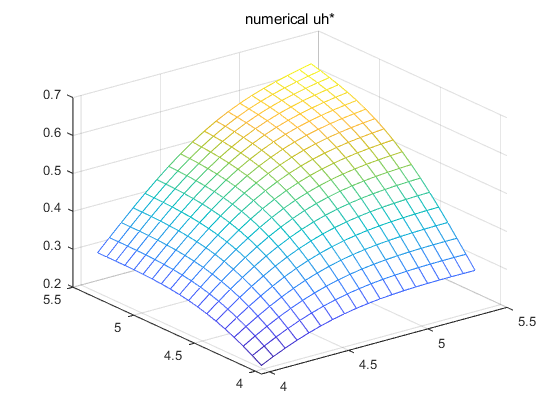}
				\end{minipage}
			}
			\caption{ The solutions of 2d Black Sholes equation (upper right).}
			\label{fig:2d}
		\end{figure}											
		
		\section{Conclusion}
		
		In this paper, we develop a quantum computing framework for solving the Black-Scholes equation by combining the Trotter splitting method with the Schr\"odingerisation approach. We present a practical implementation of quantum circuits based on operator splitting techniques, including detailed circuit designs for each computational component. Our analysis provides comprehensive quantum circuit descriptions and complexity estimates, demonstrating quantum advantages in high-dimensional settings compared to classical methods.
		
		Numerical experiments performed on the Qiskit platform validate the convergence of our Schr\"odingerisation method for both one-dimensional and two-dimensional Black-Scholes equations, confirming its effectiveness for financial derivative pricing applications.
		
		\section*{Acknowledgement}
		SJ and LZ  were  supported by  NSFC grant No. 12341104, the Shanghai Jiao Tong University 2030 Initiative, the  Science and Technology Innovation Key R\&D Program of Chongqing grant No. CSTB2024TIAD-STX0035, and the Fundamental Research Funds for the Central Universities.  SJ was also supported by NSFC grant Nos. 12350710181 and 12426637. XY was supported by NSFC grant No. 12426626. 										
		\bibliographystyle{plain}
		\normalem
		\bibliography{refs}
		
	\end{document}